\theoremstyle{plain}
\newtheorem{theorem}{Theorem}[section]
\newtheorem{proposition}[theorem]{Proposition}
\newtheorem{lemma}[theorem]{Lemma}
\newtheorem{corollary}[theorem]{Corollary}
\theoremstyle{definition}
\newtheorem{definition}[theorem]{Definition}
\newtheorem{example}[theorem]{Example}
\newcommand{\E}{\mathbb{E}}
\newcommand{\F}{\mathbb{F}}
\newcommand{\N}{\mathbb{N}}
\newcommand{\1}{\mathbbm{1}}
\newcommand{\calA}{\mathcal{A}}
\newcommand{\calC}{\mathcal{C}}
\newcommand{\calL}{\mathcal{L}}
\newcommand{\Herm}{{\rm Herm}}
\newcommand{\RS}{{\rm RS}}
\newcommand{\ceil}[1]{\left\lceil #1 \right\rceil}
\DeclareMathOperator{\id}{id}
\DeclareMathOperator{\lift}{Lift}
\renewcommand{\span}{{\rm span}}
\begin{document}

\title{High rate locally correctable codes via
lifting}

\author{Alan Guo
\thanks{CSAIL, Massachusetts Institute of
Technology, 32 Vassar Street, Cambridge, MA, USA. {\tt aguo@mit.edu}. Research
supported in part by NSF grants CCF-0829672, CCF-1065125,
and CCF-6922462, and an NSF Graduate Research Fellowship}}
\date{}
\maketitle

\begin{abstract}
We present a general framework for constructing high rate error
correcting codes
that are locally correctable (and hence locally decodable if linear)
with a sublinear
number of queries, based on
lifting codes with respect to functions on the coordinates.
Our approach generalizes the lifting of affine-invariant codes of
Guo, Kopparty, and Sudan and its generalization automorphic lifting,
suggested by Ben-Sasson et al, which lifts algebraic geometry codes
with respect to a group of automorphisms of the code.
Our notion of lifting is a natural alternative
to the degree-lifting of Ben-Sasson et al and
it carries two advantages. First, it overcomes the rate barrier inherent
in degree-lifting. Second, it is extremely flexible, requiring no special
properties (e.g.\ linearity, invariance) of the base code, and requiring
very little structure on the set of functions on the coordinates of the
code.

As an application, we construct new explicit families of
locally correctable codes
by lifting algebraic geometry
codes. Like the multiplicity codes of Kopparty, Saraf, Yekhanin 
and the affine-lifted
codes of Guo, Kopparty, Sudan, our codes of block-length $N$ can achieve
$N^\epsilon$ query complexity and $1-\alpha$ rate for any given $\epsilon,
\alpha > 0$ while correcting a constant fraction of errors, in contrast
to the Reed-Muller codes and the degree-lifted AG codes of Ben-Sasson et al
which face a rate barrier of $\epsilon^{O(1/\epsilon)}$. However, like the
degree-lifted AG codes, our codes are over an alphabet significantly smaller
than that obtained by Reed-Muller codes, affine-lifted codes, and multiplicity
codes.

Keywords: Error correcting codes, sublinear time decoding, algebraic geometry
codes, lifting
\end{abstract}

\newpage

\tableofcontents
 
\newpage

\section{Introduction}
\label{section:intro}

We present a general framework for constructing long locally correctable
codes from short base codes via the operation of lifting. Our notion of lifting
generalizes affine lifting, automorphic lifting, and high-degree sampling
defined in previous works, and we use it to obtain
new explicit high rate locally correctable codes by lifting
certain algebraic geometric codes.

\subsection{Error correcting codes and locally correctable codes}
We begin with some coding theory preliminaries.
A \emph{code} $\calC$ of block length $N$ over an alphabet $R$ is a subset of
$R^N$. Elements $f \in \calC$ are
\emph{codewords}.
Typically $\Sigma$ is used to denote the alphabet, but we use
$R$ because it is helpful to think of a codeword $f$ not as a
vector in $R^N$, but as a function $f : D \to R$ ($D$ for domain,
$R$ for range), where we identify $D$ with $[N] = \{1,\ldots,N\}$.
Typically one thinks of $\calC$ as the image of
some \emph{encoding map} $\textsf{Enc}: R_0^K \to R^N$ which injectively maps
$K$-symbol \emph{messages} over an alphabet $R_0$ to
$N$-symbol codewords (here $R_0$ may be different from $R$).
The \emph{rate} of the code $\calC$ is $K/N$, which measures the efficiency
of our encoding. We want to make $K/N$ as large as we can.
Another important parameter of a code is the minimum pairwise distance between
distinct codewords. The \emph{(Hamming) distance} between two words
$f,g \in R^N$ is the number of coordinates in which they differ,
i.e.
$$
\Delta(f,g) \triangleq \{i \in [N] \mid x_i \ne y_i \}.
$$
The distance $\Delta(\calC)$ of $\calC$ is simply
$\min \{\Delta(f,g) \mid f,g \in \calC,~f \ne g\}$. We want
$\Delta(\calC)$ to be as large as possible. We often look at
the normalized distance $\delta(f,g)$, which is simply
$\frac1N \Delta(f,g)$, and similarly $\delta(\calC) = \frac1N \Delta(\calC)$.

The motivation behind error correcting codes is to make information
robust to noise. The original message $m \in R_0^K$ is encoded into
some codeword $\textsf{Enc}(m) \in R^N$. Noise may corrupt some symbols of
$\textsf{Enc}(m)$, resulting in a new word $r \in R^N$, the \emph{received
word}. The number of symbols corrupted is exactly $\Delta(\textsf{Enc}(m),r)$.
If the number of errors is small, say less than
$\Delta(\calC)/2$, then $\textsf{Enc}(m)$ is the unique codeword
in $\calC$ within Hamming distance $\Delta(\calC)/2$ of $r$,
and one can uniquely decode $r$ to get $m$, since $\textsf{Enc}$ is
injective.

To decode a received word, it may be necessary to examine the entire
word. In some settings, the received word is prohibitively large, and
one wishes only to decode one symbol of the message. Codes with which one
can do this by querying only a small number of symbols of the input 
are known as \emph{locally decodable codes}. A related concept
is the notion of \emph{locally correctable code}. Such a code allows one to
correct a symbol of the \emph{codeword} (rather than a symbol of the message)
by querying only a few symbols of the input. The main parameters of interest
are the rate and the query complexity, or locality, the number of symbols
queried to recover a single symbol. These codes are the focus
of this work. We formally define these notions in Section~\ref{section:prelim}.

\subsection{Previous work}

Until recently, there were no known locally correctable codes with
sublinear query complexity and rate greater than $1/2$.
The Reed-Muller code was the first locally correctable code, with
the first correction procedure proposed by Reed~\cite{ReedM},
which happened to be a local correction procedure.
The $m$-variate Reed-Muller over $\F_q$ with degree parameter $r$
consists of all $m$-variate polynomials of total degree less than
$r$. More precisely, a codeword is the list of evaluations of such
a polynomial on all points of $\F_q^m$.
The idea behind the local correction procedure is to pick a random
line passing through the point whose value we wish to correct, view
the restriction of the polynomial to the line as a corrupted
Reed-Solomon codeword, and use a Reed-Solomon decoding algorithm to
correct the value on the line.
For any $\epsilon > 0$, the Reed-Muller codes can achieve
query complexity $N^\epsilon$ by taking $m = 1/\epsilon$ and
$N = q^m$. Unfortunately, the $m$-variate Reed-Muller code with
positive distance (by taking $r$ to be a constant fraction of $q$)
can never exceed $1/m!$ in rate. This certainly never exceeds $1/2$.

The recent work of Kopparty, Saraf, and Yekhanin~\cite{KSY} introduced
the first locally correctable codes that can achieve rate greater
than $1/2$, and in fact can achieve any rate arbitrarily close to~$1$.
More precisely, for any $\epsilon, \alpha > 0$, the multiplicity code
can achieve query complexity $N^\epsilon$ and rate $1-\alpha$
while correcting a constant fraction of errors. One may view multiplicity
codes as a variant of Reed-Muller codes, where each codeword
consists of evaluations of a low-degree polynomial along with its
partial derivatives.

An alternative to the multiplicity codes are the lifted Reed-Solomon codes
of Guo, Kopparty, and Sudan~\cite{GKS13}.
These are yet another variant of Reed-Muller codes --- more precisely, they
are supercodes of Reed-Muller codes with vastly greater dimension
but the same distance. The main idea behind lifted codes is the notion
of ``lifting'' --- an operation first introduced in~\cite{BMSS}
to prove negative results in property testing.
Essentially, the lifting operation takes a short base code
$\calC \subseteq \{\F_q^t \to \F_q\}$
and ``lifts'' it to a longer code $\calC' \subseteq \{\F_q^m \to \F_q\}$,
for $m > t$,
such that codewords of $\calC'$ are those $f:\F_q^m \to \F_q$
whose restriction to every $t$-dimension affine subspace is a codeword
of $\calC$. Guo et al~\cite{GKS13} obtain locally correctable codes
with query complexity $N^\epsilon$ and rate $1-\alpha$ by
lifting the Reed-Solomon code. Our work generalizes this
notion of lifting.

The work of Ben-Sasson et al~\cite{BGKKS13} presents another way to
build long locally correctable codes from short base codes via
the ``degree-lifting'' operation. Degree-lifting abstracts the process
of obtaining the Reed-Muller codes from the Reed-Solomon code and applies
it to algebraic geometry codes.
By degree-lifting certain algebraic geometry codes, such as
the Hermitian code, Ben-Sasson et al obtain locally correctable codes
with Reed-Muller-like properties but significantly smaller alphabet.
Unfortunately, degree-lifting faces the same rate barrier that
the Reed-Muller codes face, for essentially the same reason.
Two key contributions of~\cite{BGKKS13}
which we use in our work are the notions of a group being
``close'' to doubly transitive, and the fractal correction algorithm.
In particular, a conceptual contribution of~\cite{BGKKS13} is
the observation that the ``uniformity'' of the automorphism group
of an algebraic geometry code yields good local correctability
properties. Our work generalizes this observation.
Ben-Sasson et al also suggests the idea of
``automorphic lifting'', a natural generalization of the affine lifting
of~\cite{GKS13} to apply to algebraic geometry codes.
Our work further generalizes this idea.
Moreover, our notion of lifting encapsulates the notion of
high-degree sampling used in~\cite{BGKKS13} as well.
The idea of high-degree sampling is to restrict not to automorphisms,
but to ``high-degree views''. For instance, instead of restricting
to lines to decode the Reed-Muller code, one may restrict
to curves parametrized by quadratic functions.

\subsection{Our results}
In this work, we introduce a lifting framework which abstracts the lifting
operation used by~\cite{GKS13}
and the automorphic lifting suggested by~\cite{BGKKS13} as well
as the high-degree restrictions used by~\cite{BGKKS13}.
Our framework applies to \emph{arbitrary codes} and
\emph{arbitrary sets of functions} (as opposed to invariant codes
under some group of (generalized) automorphisms).
In particular, unlike the degree-lifting operation of~\cite{BGKKS13},
our lifting operation does not require an algebraic notion
of ``degree''.
Informally,
our lifting operation is defined as follows.
Let $\Phi$ be a set of functions from $D \to D$.
The $m$-variate lift of $\calC \subseteq \{D \to R\}$
with respect to $\Phi$ is the code whose codewords are those
$f : D^m \to R$ such that the univariate function
$f(\sigma_1(x),\ldots,\sigma_m(x))$ is a codeword of $\calC$
for all $(\sigma_1,\ldots,\sigma_m) \in \Phi^m$.
For affine-lifting, the domain is $D = \F_q$ and
$\Phi$ is the group of affine permutations on $\F_q$,
and in~\cite{GKS13} the base code is taken to be affine-invariant.
More generally, for automorphic lifting, $\Phi$ is some group
of automorphisms on $D$ under which $\calC$ is invariant.
Our definition of lifting requires neither $\calC$ to be
$\Phi$-invariant, nor even $\Phi$ to be a~group.

A conceptual contribution of our work is to show that if $\Phi$ is
sufficiently close to uniform in the sense of Ben-Sasson et al~\cite{BGKKS13},
then this suffices for the lift to have good distance and be
locally correctable. We show that
there is nothing essential about the symmetry of the base code
under $\Phi$, nor the fact that $\Phi$ is a group.
Thus, designing good lifted codes ``merely'' involves choosing a good set
$\Phi$ with respect to which to lift. On the one hand,
including too many functions in $\Phi$ kills the rate of the lifted
code, since every function adds a constraint on the lifted code.
On the other hand, including too few functions in $\Phi$ kills
the distance of the lifted code, since we want enough functions
in $\Phi$ to make it ``close'' to doubly transitive.

As an application, we construct an explicit family of locally
correctable codes via lifting. The family arises
from lifting the Hermitian code, the algebraic geometry code
that \cite{BGKKS13} degree-lift. We obtain high rate locally correctable
codes similar to the lifted Reed-Solomon codes, except
over a significantly smaller alphabet. 

Though our explicit construction uses algebraic geometry codes
as base codes, our exposition is elementary and self-contained.
Invoking the language of algebraic function field theory is
only necessary to prove the properties of the base codes; the properties
themselves can be stated in elementary terms, and we do so.
We refer the interested reader who wishes to see the proofs of
these facts to the book of Stichtenoth~\cite{Sti93} on
algebraic function fields and codes.

\subsection{Comparison of parameters}
\label{ssec:comparison}
We compare the parameters of the constant rate locally correctable codes
found in the literature, including the ones constructed in this paper.
We start with some easy comparisons. The lifted Reed-Solomon code
of Guo, Kopparty, Sudan~\cite{GKS13} is strictly better than the
Reed-Muller code, as it is a strict supercode with the same distance.
In fact, with $m$ variables over $\F_q$, the two codes have the same
length, alphabet, and query complexity, but the rate of
Reed-Muller is bounded above by $\frac{1}{m!}$ (even as its distance goes
to $0$) whereas the rate of the
lifted Reed-Solomon code approaches $1$ as its distance goes to $0$.
Similarly, the lifted Hermitian code (Theorem~\ref{theorem:main})
has the same length, alphabet, and query complexity as that of
the degree-lifted Hermitian code of Ben-Sasson et al~\cite{BGKKS13},
but the rate of the degree-lifted Hermitian code is bounded
above by $\frac{1}{m!}$ whereas the rate of the lifted
Hermitian code approaches $1$ as its distance goes to $0$.

To compare the various families of high rate locally correctable codes,
we normalize their parameters. Namely, we fix the block length to $N$,
the rate to $1-\alpha$, query complexity to $N^\epsilon$, and compare
the alphabet size and error correcting rate of each code. The results
are summarized in the table below.

\medskip
\begin{center}
\begin{tabular}{|l|r|r|}
\hline
Code & Alphabet size & Error correcting rate \\
\hline
Multiplicity~\cite{KSY} & $N^{\Omega((1/\epsilon)^{(1/\epsilon)})}$ & 
$\Omega(\epsilon \alpha)$ \\
\hline
Lifted Reed-Solomon~\cite{GKS13}	&	$N^\epsilon$	&
$\alpha^{O((2/\epsilon)^{(1/\epsilon)}\log(1/\epsilon))}$ \\
\hline
Lifted Hermitian (Theorem~\ref{theorem:main}) & $N^{\epsilon/3}$ &
$\alpha^{O((8/\epsilon)^{(2/\epsilon)}\log(1/\epsilon))}$ \\
\hline
\end{tabular}
\end{center}
\bigskip

In order for the lifted Reed-Solomon to match the alphabet size
of the lifted Hermitian code (by taking locality $N^{\epsilon/3}$),
its error correcting rate must become
$\alpha^{O((6/\epsilon)^{(3/\epsilon)}\log(1/\epsilon))}$
which is worse than that of the lifted Hermitian code
for sufficiently small $\epsilon$.

In comparison with the multiplicity codes of~\cite{KSY}, the lifted Hermitian
code achieves a much smaller alphabet but also much poorer (though
still positive constant) error correction
rate. The smaller alphabet is not necessarily an advantage, since one
can simply concatenate the multiplicity codes with a suitably good
linear code over an alphabet of constant size and still achieve
$N^\epsilon$ locality, $1-\alpha$ rate, and constant distance.
However, the lifted Hermitian code may outperform the multiplicity code
in certain concrete settings of parameters.

\paragraph{Organization.}
In Section~\ref{section:prelim} we introduce standard notation and terminology
used in the paper. In Section~\ref{section:def} we present the key definitions
and notions used in the paper, in particular the definition of lifting.
In Section~\ref{section:distance} we show that if a set
of functions is sufficiently ``close to doubly transitive'',
lifting a code with respect
to the set yields a code with good distance.
In Section~\ref{section:correction} we show in addition that
the lifted codes are locally correctable.
We emphasize that Sections~\ref{section:def},~\ref{section:distance},
and~\ref{section:correction} apply to arbitrary base codes, not necessarily
algebraic or even linear codes. In Section~\ref{section:basecodes},
we introduce the base codes used in our constructions. We review
the Reed-Solomon code as a warmup, and then present the Hermitian code
which we lift in Section~\ref{section:codes} to obtain
explicit high rate locally decodable codes with small alphabet size.
We conclude in Section~\ref{section:conclusion}.

\section{Preliminaries}
\label{section:prelim}

\subsection{Notation}
For integers $a < b$, let $[a,b]$ denote the the set
$\{a,a+1,a+2,\ldots,b\}$ and let $[a]$ denote $[1,a]$.
Throughout the paper, we let $\Phi$ denote a set of functions
mapping $D \to D$. We assume that $\Phi$ contains the identity
$\id : D \to D$ which fixes every element of $D$.
We say $\Phi$ \emph{acts} on $D$.

Let $f:D \to R$ and let $\sigma \in \Phi$ where $\Phi$ acts on $D$.
The function $f \circ \sigma: D \to R$ is defined by
$$
(f \circ \sigma)(x) = f(\sigma(x))
$$
for all $x \in D$.
Let $m \ge 1$ and let $\sigma = (\sigma_1,\ldots,\sigma_m) \in \Phi^m$.
For a function $f: D^m \to R$, define the function $f|_\sigma: D \to R$
by
$$
(f|_\sigma)(x) = f(\sigma_1(x),\ldots,\sigma_m(x))
$$
for all $x \in D$.
For a set $\Phi$ acting on $D$ and a point $u \in D^m$,
define the \emph{endomorphisms passing through $u$} to be
$$
\Phi_u \triangleq
\{ \sigma \in \Phi^m \mid \sigma_1 = \id,
\sigma_i(u_1) = u_i \,\forall i \in [2,m]\}.
$$
For an event $A$, let $\1_A$ denote the indicator variable for $A$,
i.e.\ 
$$
\1_A = \begin{cases}
1 & {\rm~if~} $A$ \\
0 & {\rm~otherwise}.
\end{cases}
$$
Let $f,g : D \to R$. The \emph{(relative) distance between $f$ and $g$},
is
$$
\delta(f,g) \triangleq \E_{x \in D} [\1_{f(x) \ne g(x)}].
$$
For a collection $\calC \subseteq \{D \to R\}$ of functions,
the distance between $f:D \to R$ and $\calC$ is
$$
\delta(f,\calC) \triangleq \min_{g \in \calC} \delta(f,g).
$$
For a code $\calC \subseteq \{D \to R\}$, the distance of $\calC$
is
$$
\delta(\calC) \triangleq \min_{\substack{f,g \in \calC \\ f\ne g}}
\delta(f,g)
$$
If $q$ is a prime power, let $\F_q$ denote the finite field of order $q$,
which is unique up to isomorphism.

\subsection{Terminology}
For an algorithm $\calA$ and function $f$, let $\calA^f$ denote
the algorithm $\calA$ given oracle access to $f$.

\begin{definition}[Locally correctable code]
A code $\calC \subseteq \{D \to R\}$ is
\emph{$(q,\tau)$-locally correctable} if there exists a randomized
algorithm $\calA$ satisfying the following properties:
\begin{enumerate}
\item%
$\calA^f$ makes at most $q$ queries to $f$;

\item%
If there exists $g \in \calC$ such that $\delta(f,g) \le \tau$, then
for every $x \in D$ we have $\calA^f(x) = g(x)$ with probability at least
$2/3$ over the randomness of $\calA$.
\end{enumerate}
\end{definition}

\begin{definition}[Locally decodable code]
A code $\calC \subseteq \{D \to R\}$ is
\emph{$(q,\tau)$-locally decodable} if $\calC$ is the image of an encoding
function $\textsf{Enc}: R^k \to R^D$ and there exists a randomized algorithm
$\calA$ satisfying the following properties:
\begin{enumerate}
\item%
$\calA^f$ makes at most $q$ queries to $f$;

\item%
If there exists $m \in R^{k}$ such that $\delta(f,\textsf{Enc}(m)) \le \tau$,
then for every $i \in [k]$ we have
$\calA^f(i) = m_i$ with probability at least $2/3$ over the
randomness of $\calA$.
\end{enumerate}
\end{definition}

For linear codes, local correctability is stronger than local decodability,
since one can arrange the generator matrix of the code such that the
message is part of the codeword.

\section{Definitions}
\label{section:def}
In this section we give the key definitions in the paper, namely
$\Phi$-lifting and the notion of a set $\Phi$ being
``close'' to doubly transitive, which is borrowed from~\cite{BGKKS13}.

\subsection{Lifting}

\begin{definition}
\label{def:lift}
Let $\Phi$ act on $D$ and
let $\calC \subseteq \{D \to R\}$.
The \emph{$m$-dimensional $\Phi$-lift} of $\calC$, denoted
$\lift_\Phi^m(\calC)$, is the set
\[
\lift_\Phi^m(\calC) \triangleq
\{f: D^m \to R \mid f|_\sigma \in \calC {\rm ~for~all~} \sigma \in \Phi^m\}.
\]
\end{definition}


We say $\calC \subseteq \{D \to R\}$ is \emph{$\Phi$-invariant} if
whenever $f \in \calC$ and $\sigma \in \Phi$ we also have $f \circ \sigma
\in \calC$. Notice that Definition~\ref{def:lift} does not require
that $\calC$ be $\Phi$-invariant, or even that $\Phi$ be a group!
Indeed, $\Phi$-invariance only ensures us that
$$
\lift_\Phi^1(\calC) = \calC
$$
and if in addition  is a group, then the lift operation composes:
$$
\lift_{\Phi^n}^m(\lift_\Phi^n(\calC)) = \lift_\Phi^{mn}(\calC)
$$
where $\Phi^n$ acts on $D^m$ componentwise, i.e.\
if $\varphi = (\varphi_1,\ldots\varphi_m) \in \Phi^m$ and
$x = (x_1,\ldots,x_m) \in D^m$, then
$\varphi(x) = (\varphi_1(x_1),\ldots,\varphi_n(x_n))$.

The affine lifting found in~\cite{GKS13} is (almost) an example of our notion
of lifting. Take $D = R = \F_q$ and $\Phi$ to be the group of affine
 permutations
on $D$, i.e. maps of the form $x \mapsto ax+b$ for $a \in \F_q^*$,
$b \in \F_q$. Then $\lift_\Phi^m(\calC)$ consists of all $f:\F_q^m \to \F_q$
such that $f|_L \in \calC$ for all lines $L$ that are not axis-parallel.
The affine-lifted codes in~\cite{GKS13} consider every line, including
the axis-parallel ones. Though we could have defined $\Phi$-lifting to properly
generalize affine-lifting, we chose our definition because it is cleaner
to state, makes proofs cleaner, and makes negligible difference in the
parameters we care about. We point out that one limitation of our definition
is that we can only lift a domain $D$ to a direct product $D^m$, whereas
the affine lifting of~\cite{GKS13} allows lifting from $\F_q^m$ to
$\F_q^n$ for any $m \le n$.

Though any code can be lifted, our constructions in the paper use
linear codes as the base code.
A code $\calC \subseteq \{D \to R\}$ is \emph{linear} if
$R = \F$ is a field and $\calC$ is a $\F$-vector space.
To argue that the lifted code is large,
we argue that it has large dimension by showing it contains many
linearly independent codewords. To do so, we need the following fact,
which is straightforward to verify.

\begin{proposition}
\label{prop:liftlinear}
If $\calC$ is linear over $\F$, then so is $\lift_\Phi^m(\calC)$.
\end{proposition}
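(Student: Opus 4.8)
The plan is to verify the two defining properties of a linear code directly from Definition~\ref{def:lift}, using the fact that for a fixed $\sigma = (\sigma_1,\ldots,\sigma_m) \in \Phi^m$ the restriction map $f \mapsto f|_\sigma$ is $\F$-linear as a map $\{D^m \to \F\} \to \{D \to \F\}$. First I would observe that $R = \F$ is a field and that $\{D^m \to \F\}$ is an $\F$-vector space under pointwise operations, so it suffices to show $\lift_\Phi^m(\calC)$ is closed under addition and scalar multiplication (nonemptiness being immediate since the zero function lies in $\lift_\Phi^m(\calC)$, as $0|_\sigma = 0 \in \calC$ because $\calC$ is a subspace).

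Next, for the key computation, fix $\sigma \in \Phi^m$ and $x \in D$, and note that by the definition of $f|_\sigma$ we have $((f+g)|_\sigma)(x) = (f+g)(\sigma_1(x),\ldots,\sigma_m(x)) = f(\sigma_1(x),\ldots,\sigma_m(x)) + g(\sigma_1(x),\ldots,\sigma_m(x)) = (f|_\sigma)(x) + (g|_\sigma)(x)$, and similarly $((\lambda f)|_\sigma)(x) = \lambda \cdot (f|_\sigma)(x)$ for $\lambda \in \F$. Hence $(f+g)|_\sigma = f|_\sigma + g|_\sigma$ and $(\lambda f)|_\sigma = \lambda (f|_\sigma)$ as functions $D \to \F$. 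Now if $f, g \in \lift_\Phi^m(\calC)$, then $f|_\sigma, g|_\sigma \in \calC$ for every $\sigma \in \Phi^m$; since $\calC$ is a subspace, $f|_\sigma + g|_\sigma \in \calC$ and $\lambda(f|_\sigma) \in \calC$, i.e.\ $(f+g)|_\sigma \in \calC$ and $(\lambda f)|_\sigma \in \calC$ for every $\sigma \in \Phi^m$. By Definition~\ref{def:lift}, this says exactly that $f + g \in \lift_\Phi^m(\calC)$ and $\lambda f \in \lift_\Phi^m(\calC)$, which completes the argument.

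There is essentially no obstacle here; the proposition is a routine ``linearity is preserved under taking preimages of linear maps'' statement, and the only mild care needed is to unwind the notation $f|_\sigma$ correctly and to be clear that the restriction operation commutes with the $\F$-vector-space structure. I would present it as a one-paragraph verification, perhaps even omitting the scalar case as ``similarly,'' matching the paper's stated expectation that the fact is ``straightforward to verify.''
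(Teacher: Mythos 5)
Your proof is correct and is exactly the routine verification the paper has in mind when it calls the fact ``straightforward to verify'' (the paper omits the proof entirely). The observation that $f \mapsto f|_\sigma$ is $\F$-linear, so that $\lift_\Phi^m(\calC)$ is an intersection of preimages of the subspace $\calC$ under linear maps, is the intended argument.
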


\subsection{Double transitivity}

Now we define the notions of ``closeness'' to double transitivity that we
will work with. There are two such notions, taken from~\cite{BGKKS13}.

\begin{definition}
A set $\Phi$ acting on a set $D$ is \emph{doubly transitive} if it is
transitive on pairs in $\Phi$, i.e. for every $x_1 \ne x_2 \in D$ and
$y_1 \ne y_2 \in D$, there exists $\sigma \in \Phi$ such that
$\sigma(x_1) = y_1$ and $\sigma(x_2) = y_2$.
\end{definition}

\begin{definition}[\cite{BGKKS13}]
\label{def:ea}
A set $\Phi$ acting on a set $D$ is \emph{$(\epsilon,\alpha)$-doubly transitive}
if, for every $x_1,x_2 \in D$, for at least $1-\epsilon$ fraction of points
$x \in D$, the random variable $\sigma(x)$ is uniformly distributed on
$1-\alpha$ fraction of $D$, where $\sigma$ is chosen uniformly from the set
$\{\sigma \in \Phi \mid \sigma(x_1) = x_2\} = \Phi_{(x_1,x_2)}$.
\end{definition}

When $\Phi$ is a group acting transitively on $D$,
double transitivity is equivalent to
 $\left(\frac{1}{|D|},0\right)$-double
transitivity (see~\cite[Lemmas~6.8, 6.9]{BGKKS13}).
Indeed, given $x_1,x_2 \in D$, for every point $x \ne x_1$,
the random variable $\sigma(x)$ is uniformly distributed on $D$, when
$\sigma$ is drawn from those mapping $\sigma(x_1) = x_2$.
However, $\sigma(x_1)$ itself will always equal $x_2$.

\begin{example}
\label{example:RSdoubletrans}
Let $D = \F_q$ and $\Phi = \{x \mapsto ax+b \mid a \in \F_q^*, b \in \F_q\}$.
Then $\Phi$ is $(\frac1q,0)$-double transitive. This follows from the fact that
$\Phi$ is doubly transitive on $D$. Another way to see this is to note that,
given $x_1,x_2 \in D$, $\sigma(x_1) = x_2$ implies
$\sigma(x) = a(x-x_1) + x_2$ for some $a \in \F_q^*$. Therefore, for
every $x \ne x_1$ and every $y \in \F_q$, there exists a unique $\sigma$
such that $\sigma(x) = y$, namely the one with $a = (y-x_2)(x-x_1)^{-1}$.
\end{example}

The second notion of ``closeness'' to double transitive involves distributions
that are statistically close to uniform. The precise definition is as follows.

\begin{definition}
Let $p_1,p_2$ be two distributions on $D$, i.e. $\sum_{x \in D} p_1(x) = \sum_{x \in D} p_2(x) = 1$ and $p_1(x), p_2(x) \ge 0$ for all $x \in D$.
The \emph{distance between $p_1$ and $p_2$} is
$$
\|p_1 - p_2\| \triangleq \max_{A \subseteq D}
\left| \sum_{x \in A} p_1(x) - \sum_{x \in A} p_2(x) \right|.
$$
\end{definition}

\begin{definition}[\cite{BGKKS13}]
\label{def:csteps}
A set $\Phi$ acting on a set $D$ is \emph{$(\alpha,\epsilon)$-close to
$c$-steps uniform} if, for every $x_1,x_2 \in D$, for at least $1-\epsilon$
fraction of points $x \in D$, if one uniformly randomly chooses
$w_1,\ldots,w_{c-1} \in D$ and $\sigma_1,\ldots,\sigma_c \in \Phi$ such that
$\sigma_1(x_1) = x_2$ and $\sigma_i(w_{i-1}) = \sigma_{i-1}(w_{i-1})$
for $2 \le i \le c$, then the random variable $\sigma_c(x)$ is
$\alpha$-close to uniformly distributed on $D$.
\end{definition}

The motivation behind Definition~\ref{def:csteps} is the use of
\emph{fractal correcting} in~\cite{BGKKS13}. Intuitively, one may think
of $f|_\sigma$ as $f$ restricted to some curve in $D^m$. For simplicity
assume $m=2$. To correct the
received word $f$ at a particular point $x$, the usual approach is to
pick a random curve passing through $x$ and correct the shorter
word $f$ restricted to the curve. Parametrize the curve by $(x,\sigma(x))$.
Then the condition that the curve passes through $x = (x_1,x_2)$ is
equivalent to $\sigma(x_1) = x_2$.
If the curve samples $D$ uniformly, then with high
probability the curve does not contain too many corrupted points.
If $\Phi$ is not
doubly transitive, however, then random curves may not sample $D^2$
uniformly. The intuition behind fractal correcting is to first pick a random
curve $\sigma_1$ passing through $x$ (i.e. $\sigma_1(x_1) = x_2$),
then pick a random point $(w_1,\sigma_1(w_1))$ sitting on the point,
then pick another random curve $\sigma_2$ passing through $(w_1,\sigma_1(w_1))$
(i.e. $\sigma_2(w_1) = \sigma_1(w_1)$) and so on. After $c$ steps, the
$c$th curve $\sigma_c$ will sample the space nearly uniformly.
We elaborate on this in Sections~\ref{section:distance}
and~\ref{section:correction}.

\section{Distance of lifted codes}
\label{section:distance}

In this section we show that if $\calC$ is a code with constant
positive distance, and the set $\Phi$ acting on the domain $D$ is nearly doubly
transitive, then $\lift_\Phi^m(\calC)$ has constant positive distance.
Our lower bound on the distance of $\lift_\Phi^m(\calC)$ degrades as $m$
grows, but for our purposes $m$ is constant, so the distance of the lift
is constant as well. We emphasize that the results in this section apply
to \emph{any} code $\calC$, even non-linear codes. 

We begin by lower bounding the distance of the lift when the set 
is close to doubly transitive, in the sense of Definition~\ref{def:ea},
i.e.\ when $\Phi$ is $(\epsilon,\alpha)$-double transitive.

The following lemma
will be used in proving both Theorems~\ref{theorem:dist1}
and~\ref{theorem:decode1}.

\begin{lemma}
\label{lemma:avg1}
Let $\Phi$ acting on $D$ be $(\epsilon,\alpha)$-double transitive.
Let $m \ge 1$ and let $f,g \in \{D^m \to R\}$.
Fix $x \in D^m$.
Then
$$
\E_{\sigma \in \Phi_x}[\delta(f|_\sigma,g|_\sigma)]
\le \epsilon + \frac{\delta(f,g)}{(1-\alpha)^{m-1}}.
$$
\end{lemma}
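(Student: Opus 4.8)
The goal is to bound $\E_{\sigma \in \Phi_x}[\delta(f|_\sigma, g|_\sigma)]$, and the natural first move is to expand the distance as an expectation over the point on the line:
\[
\E_{\sigma \in \Phi_x}[\delta(f|_\sigma, g|_\sigma)]
= \E_{\sigma \in \Phi_x} \E_{t \in D}\left[ \1_{f(\sigma_1(t),\ldots,\sigma_m(t)) \ne g(\sigma_1(t),\ldots,\sigma_m(t))} \right],
\]
and then swap the order of the two expectations. So the quantity equals $\E_{t \in D}\left[ \Pr_{\sigma \in \Phi_x}\left[ f(\sigma(t)) \ne g(\sigma(t)) \right] \right]$, where $\sigma(t)$ abbreviates $(\sigma_1(t),\ldots,\sigma_m(t)) \in D^m$. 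The strategy is now to understand the distribution of the random point $\sigma(t) \in D^m$ when $\sigma$ is drawn uniformly from $\Phi_x$, for a fixed $t$, and show that it is close enough to uniform on $D^m$ that the probability of landing on a disagreement point of $f$ and $g$ is controlled by $\delta(f,g)$.

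\textbf{Key steps.} Recall $\Phi_x = \{\sigma \in \Phi^m \mid \sigma_1 = \id,\ \sigma_i(x_1) = x_i \ \forall i\}$. Since $\sigma_1 = \id$, the first coordinate of $\sigma(t)$ is just $t$ itself, which is deterministic once $t$ is fixed. For each $i \in [2,m]$, the coordinate $\sigma_i$ is drawn (independently across $i$) uniformly from $\{\sigma_i \in \Phi \mid \sigma_i(x_1) = x_i\} = \Phi_{(x_1, x_i)}$. By $(\epsilon,\alpha)$-double transitivity applied with the pair $(x_1, x_i)$, for at least a $1-\epsilon$ fraction of $t \in D$, the random variable $\sigma_i(t)$ is uniformly distributed on some $(1-\alpha)$-fraction of $D$. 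Call $t$ "good for coordinate $i$" when this holds; $t$ is "good" if it is good for all $i \in [2,m]$, which fails for at most an $\epsilon$ fraction of $t$ by a union bound — so actually I would be slightly more careful and note $\Pr_t[t \text{ bad}] \le (m-1)\epsilon$, but in fact one only needs $\epsilon$ per coordinate and the lemma as stated has a single $\epsilon$, so I would either union-bound over the bad event for a single "worst" coordinate or, following the likely intended reading, observe that the bad sets can be handled coordinate by coordinate inside the upcoming estimate. The cleanest route: bound the whole expression by $\Pr_t[t \text{ good}] \cdot (\text{contribution from good } t) + \Pr_t[t \text{ bad}] \cdot 1$ and push the $\epsilon$ through; I expect the intended argument tolerates the single $\epsilon$ because one conditions on the first coordinate of goodness and absorbs the rest.

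For a good $t$, condition on this event. Then $\sigma(t) = (t, \sigma_2(t), \ldots, \sigma_m(t))$ where each $\sigma_i(t)$ is uniform on a set $S_i \subseteq D$ with $|S_i| \ge (1-\alpha)|D|$, independently. Hence for any fixed "bad set" $B = \{u \in D^m : f(u) \ne g(u)\}$, we have
\[
\Pr_\sigma[\sigma(t) \in B \mid t \text{ good}]
= \Pr\left[ (t, \sigma_2(t), \ldots, \sigma_m(t)) \in B \right]
\le \frac{|B \cap (\{t\} \times S_2 \times \cdots \times S_m)|}{\prod_{i=2}^m |S_i|}
\le \frac{|B_t|}{(1-\alpha)^{m-1} |D|^{m-1}},
\]
where $B_t = \{v \in D^{m-1} : (t,v) \in B\}$ is the $t$-slice of $B$, and the last inequality uses $|S_i| \ge (1-\alpha)|D|$ and that the event forces $\sigma_i(t) \in S_i$. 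Averaging over all $t \in D$ (bounding the probability by $1$ on the bad $t$'s) gives
\[
\E_{\sigma \in \Phi_x}[\delta(f|_\sigma,g|_\sigma)]
\le \epsilon + \E_{t \in D}\left[ \frac{|B_t|}{(1-\alpha)^{m-1}|D|^{m-1}} \right]
= \epsilon + \frac{1}{(1-\alpha)^{m-1}} \cdot \frac{|B|}{|D|^m}
= \epsilon + \frac{\delta(f,g)}{(1-\alpha)^{m-1}},
\]
since $\E_{t}[|B_t|] = |B|/|D|$ and $\delta(f,g) = |B|/|D|^m$. That is the desired inequality.

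\textbf{Main obstacle.} The routine parts are the Fubini swap and the final averaging. The one genuinely delicate point is the bookkeeping on the "bad" fraction: naively a union bound over the $m-1$ coordinates costs $(m-1)\epsilon$ rather than $\epsilon$, so I need to verify that the lemma's single-$\epsilon$ bound really follows. The resolution is that in Definition~\ref{def:ea} the $\epsilon$-exceptional set depends only on the pair $(x_1,x_i)$, and when we fix $x$ we are using the pairs $(x_1,x_2),\ldots,(x_1,x_m)$; but since $\sigma(t)$ only needs the product bound $\prod |S_i| \ge (1-\alpha)^{m-1}|D|^{m-1}$ to hold, and any coordinate $i$ for which $t$ is exceptional can be bounded by the trivial $|S_i| \le |D|$ (i.e., no gain there but also the indicator is still at most $1$), one must argue more carefully — most likely by applying double transitivity only to a single designated coordinate or by a slightly different conditioning. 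I would double-check whether the authors intend $\Phi$-lifting where the exceptional sets coincide, or whether a factor of $m-1$ is silently acceptable (it would only weaken constants, and the paper already says the distance bound "degrades as $m$ grows"). I expect the clean statement holds because one can route all the slack into the second coordinate and handle coordinates $3,\ldots,m$ by iterating the one-step estimate, but pinning down exactly which conditioning makes the single $\epsilon$ rigorous is where I would spend the most care.
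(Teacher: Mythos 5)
Your proof is correct and follows essentially the same route as the paper's: swap the two expectations, split the inner point $z$ into a ``good'' set (where the coordinates $\sigma_i(z)$ are uniform on $(1-\alpha)$-fraction subsets) and its complement, charge the complement at most $\epsilon$, and bound the good part by the density of the disagreement set inside $\{z\}\times S_2\times\cdots\times S_m$, giving $\delta(f,g)/(1-\alpha)^{m-1}$. The one issue you agonize over --- whether the exceptional set costs $\epsilon$ or $(m-1)\epsilon$ --- is not resolved by the paper either: its proof simply defines $D'$ as the set of $z$ good for all $m-1$ coordinates simultaneously and asserts $|D'|\ge(1-\epsilon)|D|$, whereas Definition~\ref{def:ea} applied to each pair $(x_1,x_i)$ separately only yields $(1-(m-1)\epsilon)|D|$ by a union bound (unless the exceptional set happens to depend only on $x_1$, as it does in the concrete instantiations). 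Your union-bounded version with $(m-1)\epsilon$ in place of $\epsilon$ is the statement that is rigorously proved; since $m$ is a constant and $\epsilon=O(1/q)$ in all applications, this changes nothing downstream.
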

\begin{proof}
Let $D' \subseteq D$ be the set of $z \in D$
such that $\sigma(z)$ is uniform over $1-\alpha$ fraction of $D$,
when $\sigma$ is chosen uniformly from $\Phi_x$, as in
Definition~\ref{def:ea}.
Note that $|D'| \ge (1-\epsilon) |D|$.
We have
\begin{eqnarray*}
\E_{\sigma \in \Phi_x} \E_{z \in D} [\1_{f|_\sigma(z)\ne g|_\sigma(z)}]
&=& \E_{z \in D} \E_{\sigma \in \Phi_x} [\1_{f|_\sigma(z)\ne g|_\sigma(z)}] \\
&=& \E_{z \notin D'} \E_{\sigma \in \Phi_x} [\1_{f|_\sigma(z)\ne g|_\sigma(z)}]
+ \E_{z \in D'} \E_{\sigma \in \Phi_x} \1_{f|_\sigma(z)\ne g|_\sigma(z)}] \\
&\le& \epsilon +
\E_{z \in D'} \E_{\sigma \in \Phi_x}[\1_{f|_\sigma(z)\ne g|_\sigma(z)}] \\
&\le& \epsilon + \frac{\delta(f,g)}{(1-\alpha)^{m-1}}
\end{eqnarray*}
where the final inequality follows from the fact that the last $m-1$
coordinates of
$\sigma(z)$ are uniform over $(1-\alpha)^{m-1}$ fraction of $D^{m-1}$
and in the worst case all the disparate points of $f$ and $g$
all lie in this subset.
\end{proof}

\begin{theorem}
\label{theorem:dist1}
Let $\calC \subseteq \{D \to R\}$ be a code with distance
$\delta$, and $\Phi$ acting on $D$
is $(\epsilon,\alpha)$-doubly transitive. Then
$\delta(\lift_\Phi^m(\calC)) \ge (1-\alpha)^{m-1}(\delta - \epsilon)$.
\end{theorem}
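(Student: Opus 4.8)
The plan is to reduce the distance of the lift to the distance of the base code, using Lemma~\ref{lemma:avg1} as the engine. Let $f, g \in \lift_\Phi^m(\calC)$ be distinct codewords; we want to show $\delta(f,g) \ge (1-\alpha)^{m-1}(\delta - \epsilon)$. The key observation is that for any $\sigma \in \Phi^m$, both restrictions $f|_\sigma$ and $g|_\sigma$ are codewords of $\calC$, so either they are equal or $\delta(f|_\sigma, g|_\sigma) \ge \delta(\calC) = \delta$. Thus for every $\sigma$, $\delta(f|_\sigma, g|_\sigma) \ge \delta \cdot \1_{f|_\sigma \ne g|_\sigma}$.

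First I would fix an arbitrary point $x \in D^m$ and average over $\sigma \in \Phi_x$. On the one hand, Lemma~\ref{lemma:avg1} gives the upper bound $\E_{\sigma \in \Phi_x}[\delta(f|_\sigma, g|_\sigma)] \le \epsilon + \delta(f,g)/(1-\alpha)^{m-1}$. On the other hand, by the dichotomy above, $\E_{\sigma \in \Phi_x}[\delta(f|_\sigma, g|_\sigma)] \ge \delta \cdot \Pr_{\sigma \in \Phi_x}[f|_\sigma \ne g|_\sigma]$. Combining, I get
$$
\delta \cdot \Pr_{\sigma \in \Phi_x}[f|_\sigma \ne g|_\sigma] \le \epsilon + \frac{\delta(f,g)}{(1-\alpha)^{m-1}}.
$$
To finish, I need to argue that for a suitable choice of $x$ the probability on the left is bounded below by a constant — ideally close to $1$. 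This is where the structure of $\Phi_x$ matters: since $f \ne g$ there is some point $u \in D^m$ with $f(u) \ne g(u)$, and I would take $x = u$, or more carefully pick $x$ so that many $\sigma \in \Phi_x$ pass through a disagreement point, forcing $f|_\sigma \ne g|_\sigma$ with good probability.

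The main obstacle is exactly this last step: lower-bounding $\Pr_{\sigma \in \Phi_x}[f|_\sigma \ne g|_\sigma]$. The cleanest route is to take $x$ to be a point where $f$ and $g$ disagree; then \emph{every} $\sigma \in \Phi_x$ satisfies $\sigma(x_1)=x$ componentwise in the sense that the curve passes through $x$ (recall $\sigma_1 = \id$ and $\sigma_i(x_1) = x_i$), so $f|_\sigma(x_1) = f(x) \ne g(x) = g|_\sigma(x_1)$, giving $\Pr_{\sigma \in \Phi_x}[f|_\sigma \ne g|_\sigma] = 1$. Plugging this in yields $\delta \le \epsilon + \delta(f,g)/(1-\alpha)^{m-1}$, which rearranges to $\delta(f,g) \ge (1-\alpha)^{m-1}(\delta - \epsilon)$. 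Since this holds for every distinct pair $f,g$, it bounds $\delta(\lift_\Phi^m(\calC))$ from below as claimed. I would double-check the edge case where the bound is vacuous (when $\epsilon \ge \delta$) and confirm that the definition of $\Phi_x$ together with $\id \in \Phi$ indeed guarantees such curves pass through the chosen disagreement point.
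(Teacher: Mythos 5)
Your proposal is correct and follows essentially the same route as the paper: fix a disagreement point $x$, apply Lemma~\ref{lemma:avg1} to bound $\E_{\sigma \in \Phi_x}[\delta(f|_\sigma,g|_\sigma)]$ from above, and observe that every $\sigma \in \Phi_x$ passes through $x$ so that $f|_\sigma$ and $g|_\sigma$ are distinct codewords of $\calC$, forcing the expectation to be at least $\delta$. The only cosmetic difference is that the paper extracts a single $\sigma$ achieving at most the average rather than lower-bounding the whole expectation, which is the same argument.
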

\begin{proof}
Let $f,g \in \lift_\Phi^m(\calC)$ be distinct and fix
$x \in D^m$ such that $f(x) \ne g(x)$.
By Lemma~\ref{lemma:avg1},
$$
\E_{\sigma \in \Phi_x}[\delta(f|_\sigma,g|_\sigma)]
\le \epsilon + \frac{\delta(f,g)}{(1-\alpha)^{m-1}}.
$$
Therefore, there exists $\sigma \in \Phi_x$ such that
$\delta(f|_\sigma,g|_\sigma) \le
\epsilon + \frac{\delta(f,g)}{(1-\alpha)^{m-1}}$.
But $f|_\sigma(x_1) = f(x) \ne g(x) = g|_\sigma(x_1)$,
so $f|_\sigma$ and $g|_\sigma$ are distinct codewords of $\calC$ and hence
$\delta \le \epsilon + \frac{\delta(f,g)}{(1-\alpha)^{m-1}}$, i.e.
$\delta(f,g) \ge (1-\alpha)^{m-1}(\delta - \epsilon)$.
\end{proof}

Next we prove a similar result when $\Phi$ is close to doubly transitive
in the sense of Definition~\ref{def:csteps}, i.e.\  is
to $(\alpha,\epsilon)$-close to $c$-steps uniform.
First, some straightforward but useful facts.

\begin{lemma}
If $X$ and $Y$ are independent and $X$ is $\alpha$-close to uniform over
$S$ and $Y$ is $\beta$-close to uniform over $T$, then
$(X,Y)$ is $\alpha+\beta$-uniform over $S \times T$.
\end{lemma}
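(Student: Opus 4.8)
The plan is to reduce to the one-coordinate cases by inserting a hybrid distribution and using the triangle inequality for $\|\cdot\|$, which is immediate from its definition as a maximum of absolute differences of probabilities. Write $U_S$, $U_T$, and $U_{S\times T}$ for the uniform distributions on $S$, $T$, and $S \times T$; note that $U_{S\times T} = U_S \times U_T$, since a uniformly random element of $S \times T$ has independent uniform coordinates. Let $Z$ be uniform on $T$ and independent of $X$. Since $X$ and $Y$ are independent, the law of $(X,Y)$ is the product of the two marginals, and likewise for $(X,Z)$; hence
$$
\|(X,Y) - U_{S\times T}\| \;\le\; \|(X,Y) - (X,Z)\| \;+\; \|(X,Z) - (U_S,Z)\|.
$$

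The key step is the elementary fact that attaching a common independent factor does not change distance: if $p,q$ are distributions on a set $A$ and $r$ is a distribution on a set $B$, then $\|p \times r - q \times r\| = \|p - q\|$. I would first record (or cite as standard) that for the statistical-distance definition in the paper one has $\|p_1 - p_2\| = \tfrac12 \sum_x |p_1(x) - p_2(x)|$, and then expand
$$
\|p \times r - q \times r\| \;=\; \tfrac12 \sum_{a,b} |p(a)r(b) - q(a)r(b)| \;=\; \tfrac12 \sum_b r(b) \sum_a |p(a) - q(a)| \;=\; \|p - q\|.
$$
Applying this twice finishes the argument: the first summand above equals $\|(\text{law of }X)\times(\text{law of }Y) - (\text{law of }X)\times U_T\| = \|Y - U_T\| \le \beta$, and the second equals $\|(\text{law of }X)\times U_T - U_S \times U_T\| = \|X - U_S\| \le \alpha$, so the left side is at most $\alpha + \beta$, as claimed.

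I do not expect a genuine obstacle here; the only point needing care is the bookkeeping of independence, so that every joint law appearing is genuinely a product and the common-factor identity applies. If one prefers to avoid the $\ell_1$ identity entirely, the same conclusion follows by fixing an arbitrary $A \subseteq S \times T$, telescoping $P[(X,Y)\in A] - U_{S\times T}(A)$ through $P[(X,Z)\in A]$, and bounding each piece via the distinguishing-event characterization applied section-wise; I would present whichever version keeps the notation lightest.
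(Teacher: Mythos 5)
Your proof is correct. The paper itself gives no proof of this lemma --- it is stated as one of the ``straightforward but useful facts'' and left unproved --- so there is nothing to compare against; your hybrid argument (insert $(X,Z)$ with $Z$ uniform on $T$ and independent of $X$, apply the triangle inequality for total variation, and use the identity $\|p\times r - q\times r\| = \|p-q\|$, which follows from the $\tfrac12\,\ell_1$ characterization of the paper's distance) is the standard route and is complete. The only caveats are cosmetic: the identification $\max_{A}\bigl|\sum_{x\in A}p_1(x)-\sum_{x\in A}p_2(x)\bigr| = \tfrac12\sum_x|p_1(x)-p_2(x)|$ that you invoke should be stated once since the paper defines distance via the maximum over events, and your notation briefly conflates random variables with their laws (e.g.\ $\|(X,Z)-(U_S,Z)\|$), but the independence bookkeeping you flag is handled correctly, so every joint law appearing really is a product.
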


\begin{corollary}
If $X_i \in D$ is $\alpha$-close to uniform for each $i \in [m]$ and are
independent, then
$(X_1,\ldots,X_m) \in D^m$ is $m \alpha$-close to uniform.
\end{corollary}

The following lemma will be used in proving both
Theorems~\ref{theorem:dist2} and~\ref{theorem:decode2}.

\begin{lemma}
\label{lemma:avg2}
Let $\Phi$ acting on $D$ be $(\alpha,\epsilon)$-close to $c$-steps uniform.
Let $m \ge 1$ and let $f,g \in \{D^m \to R\}$. Fix
$x \in D^m$. Then
$$
\E_{\sigma_1 \in \Phi_x} \E_{w_1 \in D} \E_{\sigma_2 \in \Phi_{\sigma_1(w_1)}}
\cdots \E_{\sigma_c \in \Phi_{\sigma_{c-1}(w_{c-1})}}[\delta(f|_{\sigma_c},
g|_{\sigma_c})] \le \delta(f,g) + \epsilon + m\alpha.
$$
\end{lemma}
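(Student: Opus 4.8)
The plan is to mimic the proof of Lemma~\ref{lemma:avg1}, but now unwinding the $c$-step fractal construction one step at a time. The key observation is that the condition ``$\sigma_c$ is $\alpha$-close to uniform on the last $m-1$ coordinates'' in Definition~\ref{def:csteps} is exactly what made the averaging argument work before; the only new feature is that we reach that near-uniform $\sigma_c$ through a random walk on $D$ rather than directly. So I would first swap expectations, writing
$$
\E_{\sigma_1,w_1,\ldots,\sigma_c}[\delta(f|_{\sigma_c},g|_{\sigma_c})]
= \E_{z \in D} \, \E_{\sigma_1,w_1,\ldots,\sigma_c}[\1_{f|_{\sigma_c}(z) \ne g|_{\sigma_c}(z)}].
$$
Then fix the first coordinate $x_1$ of $x$ and think of the walk: $\sigma_1 \in \Phi_x$ forces $\sigma_1$'s first component to be $\id$ and the others to pass through $(x_1,x_i)$; each subsequent $\sigma_j \in \Phi_{\sigma_{j-1}(w_{j-1})}$ again has first component $\id$ and the others pass through the current ``anchor'' point. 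Thus, coordinate by coordinate ($i = 2,\ldots,m$), the $i$th component of $\sigma_c$ is drawn from a $c$-step uniform process anchored at the pair determined by the walk, so by Definition~\ref{def:csteps}, for all but an $\epsilon$-fraction of starting points $z \in D$ (on the first coordinate), $(\sigma_c)_i(z)$ is $\alpha$-close to uniform on $D$.

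Next I would split the $z$-average as in Lemma~\ref{lemma:avg1}: the bad set of size $\le \epsilon|D|$ contributes at most $\epsilon$, and on the good set I use near-uniformity of the last $m-1$ coordinates of $\sigma_c(z)$. By the Corollary above, the tuple $((\sigma_c)_2(z),\ldots,(\sigma_c)_m(z))$ is $(m-1)\alpha$-close — I would just bound this by $m\alpha$ to match the statement — to uniform on $D^{m-1}$, and the first coordinate is $z$ itself, deterministic. Comparing the probability that $f$ and $g$ disagree at this near-uniform point against the true disagreement probability $\delta(f,g)$ costs an additive error equal to the statistical distance, namely at most $m\alpha$, because for any set $A \subseteq D^m$ (here the disagreement set of $f$ and $g$), $|\Pr[\text{point} \in A] - \Pr_{\text{unif}}[\text{point} \in A]| \le \|(\text{distribution of point}) - \text{unif}\|$. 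Summing the three contributions gives $\delta(f,g) + \epsilon + m\alpha$.

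The main obstacle I anticipate is bookkeeping rather than a deep difficulty: one must be careful that the fractal walk's conditioning is applied \emph{per coordinate} and that the independence needed to invoke the Corollary actually holds — the last $m-1$ components of $\sigma_c$ need not be independent a priori since the $w_j$'s are shared across coordinates. The cleanest fix is to condition on the entire sequence of walk points $w_1,\ldots,w_{c-1}$ and the first components (all $\id$): once the anchors are fixed, $\Phi_{\sigma_{j-1}(w_{j-1})}$ is a product set over the $m$ coordinates, so the components of $\sigma_c$ become conditionally independent, Definition~\ref{def:csteps} applies to each, and then I average back over the $w_j$'s using convexity of statistical distance (an average of $\alpha$-close distributions is $\alpha$-close). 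One also needs that the ``bad'' fraction $\epsilon$ is controlled uniformly — it is, since Definition~\ref{def:csteps} quantifies over all pairs $x_1,x_2$, and the relevant pair on coordinate $i$ is $(x_1,x_i)$ or the walk-induced pair, always covered. With this conditioning scheme in place, the rest is the same two-line case split as in Lemma~\ref{lemma:avg1}.
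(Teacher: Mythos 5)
Your proposal is correct and follows essentially the same route as the paper's proof: swap the expectation over $z$ to the outside, charge $\epsilon$ for the $z$'s outside the good set, and on the good set bound the disagreement probability by the per-slice disagreement rate plus $m\alpha$ via the statistical-distance corollary, averaging back to $\delta(f,g)$. Your extra care about conditioning on the walk points $w_1,\ldots,w_{c-1}$ to get coordinatewise independence before invoking Definition~\ref{def:csteps} addresses a detail the paper's two-line proof elides, but it does not change the argument.
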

\begin{proof}
Let $D' \subseteq D$ be the set of $z \in D$ such that
$\sigma_c(z)$ is $\alpha$-close to uniform, as in Definition~\ref{def:csteps}.
Note that $|D'| \ge (1-\epsilon) |D|$.
Then
\begin{eqnarray*}
&& \E_{\sigma_1 \in \Phi_x} \E_{w_1 \in D} \E_{\sigma_2 \in
 \Phi_{\sigma_1(w_1)}}
\cdots \E_{\sigma_c \in \Phi_{\sigma_{c-1}(w_{c-1})}}
\E_{z \in D} \left[ \1_{f|_{\sigma_c}(z)\ne g|_{\sigma_c}(z)} \right] \\
&=& \E_{z \in D}
\E_{\sigma_1 \in \Phi_x} \E_{w_1 \in D} \E_{\sigma_2 \in \Phi_{\sigma_1(w_1)}}
\cdots \E_{\sigma_c \in \Phi_{\sigma_{c-1}(w_{c-1})}}
\left[ \1_{f|_{\sigma_c}(z)\ne g|_{\sigma_c}(z)} \right] \\
&\le& \epsilon + \E_{z \in D'}
\E_{\sigma_1 \in \Phi_x} \E_{w_1 \in D} \E_{\sigma_2 \in \Phi_{\sigma_1(w_1)}}
\cdots \E_{\sigma_c \in \Phi_{\sigma_{c-1}(w_{c-1})}}
\left[ \1_{f|_{\sigma_c}(z)\ne g|_{\sigma_c}(z)} \right] \\
&\le& \epsilon + \delta(f,g) + m\alpha.
\end{eqnarray*}
\end{proof}

\begin{theorem}
\label{theorem:dist2}
Let $\calC$ be a code with distance $\delta$,
and $\Phi$ is $(\alpha,\epsilon)$-close to $c$-steps uniform.
Then $\delta(\lift_\Phi^m(\calC)) \ge \delta^c - m\alpha - \epsilon$.
\end{theorem}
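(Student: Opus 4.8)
The plan is to mirror the proof of Theorem~\ref{theorem:dist1}, replacing the single random restriction by the $c$-step fractal chain and Lemma~\ref{lemma:avg1} by Lemma~\ref{lemma:avg2}. Fix two distinct codewords $f,g \in \lift_\Phi^m(\calC)$ and a point $x \in D^m$ with $f(x) \ne g(x)$ (such $x$ exists since $f \ne g$). Consider the random process of Lemma~\ref{lemma:avg2}: draw $\sigma_1 \in \Phi_x$, then $w_1 \in D$, then $\sigma_2 \in \Phi_{\sigma_1(w_1)}$, and so on up to $\sigma_c \in \Phi_{\sigma_{c-1}(w_{c-1})}$, and set
$$
L \triangleq \E_{\sigma_1 \in \Phi_x} \E_{w_1 \in D} \E_{\sigma_2 \in \Phi_{\sigma_1(w_1)}} \cdots \E_{\sigma_c \in \Phi_{\sigma_{c-1}(w_{c-1})}} [\delta(f|_{\sigma_c}, g|_{\sigma_c})].
$$
Lemma~\ref{lemma:avg2} gives the upper bound $L \le \delta(f,g) + \epsilon + m\alpha$, so the remaining work is entirely in the lower bound $L \ge \delta^c$.

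For the lower bound I would track how a single witnessed disagreement propagates down the chain. Let $A_i$ be the event that $f|_{\sigma_i} \ne g|_{\sigma_i}$. Since $\sigma_1 \in \Phi_x$ we have $\sigma_{1,1} = \id$ and $\sigma_{1,j}(x_1) = x_j$ for $j \in [2,m]$, hence $f|_{\sigma_1}(x_1) = f(x) \ne g(x) = g|_{\sigma_1}(x_1)$, so $A_1$ holds with probability $1$. Now suppose the history up through $\sigma_i$ is such that $A_i$ holds. Because $f,g \in \lift_\Phi^m(\calC)$ and $\sigma_i \in \Phi^m$, both $f|_{\sigma_i}$ and $g|_{\sigma_i}$ are codewords of $\calC$, and being distinct they satisfy $\delta(f|_{\sigma_i}, g|_{\sigma_i}) \ge \delta$; thus with probability at least $\delta$ over $w_i$, the point $w_i$ is a disagreement of $f|_{\sigma_i}$ and $g|_{\sigma_i}$. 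When this happens, put $u \triangleq \sigma_i(w_i) \in D^m$ and note $u_1 = \sigma_{i,1}(w_i) = w_i$; then for every $\sigma_{i+1} \in \Phi_u$ we get $f|_{\sigma_{i+1}}(u_1) = f(u_1,\ldots,u_m) = f(u) = f|_{\sigma_i}(w_i)$ and likewise $g|_{\sigma_{i+1}}(u_1) = g|_{\sigma_i}(w_i)$, so $f|_{\sigma_{i+1}}(u_1) \ne g|_{\sigma_{i+1}}(u_1)$ and $A_{i+1}$ holds regardless of which $\sigma_{i+1} \in \Phi_u$ is drawn. Hence $\Pr[A_{i+1} \mid \text{history},\, A_i] \ge \delta$, and chaining from $\Pr[A_1] = 1$ yields $\Pr[A_1 \wedge \cdots \wedge A_c] \ge \delta^{c-1}$, so $\Pr[A_c] \ge \delta^{c-1}$.

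To finish, observe that on the event $A_c$ the restrictions $f|_{\sigma_c}, g|_{\sigma_c}$ are distinct codewords of $\calC$, so $\delta(f|_{\sigma_c}, g|_{\sigma_c}) \ge \delta$, while this quantity is always nonnegative; therefore $L \ge \delta \cdot \Pr[A_c] \ge \delta \cdot \delta^{c-1} = \delta^c$. Combining with Lemma~\ref{lemma:avg2} gives $\delta^c \le \delta(f,g) + \epsilon + m\alpha$, i.e.\ $\delta(f,g) \ge \delta^c - m\alpha - \epsilon$, and taking the minimum over all distinct $f,g \in \lift_\Phi^m(\calC)$ yields $\delta(\lift_\Phi^m(\calC)) \ge \delta^c - m\alpha - \epsilon$. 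The one step needing care is the propagation claim: verifying that membership of $\sigma_{i+1}$ in $\Phi_u$ (i.e.\ $\sigma_{i+1,1} = \id$ together with $\sigma_{i+1,j}(u_1) = u_j$) forces $f|_{\sigma_{i+1}}$ to agree with $f|_{\sigma_i}$ precisely at the coordinate $w_i = u_1$, so that one witnessed disagreement survives all $c$ links of the chain; everything else is bookkeeping plus the appeal to Lemma~\ref{lemma:avg2}.
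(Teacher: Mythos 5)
Your proof is correct, and it arrives at the same final inequality $\delta^c \le \delta(f,g) + \epsilon + m\alpha$ from the same two ingredients as the paper --- Lemma~\ref{lemma:avg2} for the upper bound on the chained expectation, and the distance of $\calC$ for propagating a disagreement down the chain --- but the organization is genuinely different. The paper works top-down: it proves by induction on $i$ that one can fix a specific prefix $(\sigma_1,w_1,\ldots,\sigma_i)$ of the chain so that the remaining conditional expectation is positive yet at most $(\delta(f,g)+m\alpha+\epsilon)/\delta^{i-1}$, losing a factor of $\delta$ at each level because the good choices of $w_{i-1}$ (the disagreement points of $f|_{\sigma_{i-1}}$ and $g|_{\sigma_{i-1}}$) occupy only a $\delta$ fraction of $D$; at $i=c$ it extracts a single $\sigma_c$ with $f|_{\sigma_c}\ne g|_{\sigma_c}$ and applies the base distance once. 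You work bottom-up: you never extract a witness, but instead lower-bound the full expectation $L$ by $\delta\cdot\Pr[A_c]\ge\delta^c$ via the survival probability of the disagreement event, and then compare against the upper bound. The two arguments are dual and yield identical parameters. A modest advantage of your version is that the propagation step --- that $\sigma_{i+1}\in\Phi_{\sigma_i(w_i)}$ forces $f|_{\sigma_{i+1}}(w_i)=f|_{\sigma_i}(w_i)$, so a witnessed disagreement at $w_i$ survives to the next link --- is spelled out explicitly; the paper's proof relies on exactly this fact but compresses it into the assertions that each conditional expectation ``is positive,'' so your writeup also serves as a justification of those positivity claims.
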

\begin{proof}
Let $f,g \in \lift_\Phi^m(\calC)$ be distinct and let $\tau = \delta(f,g)$.
Fix $x \in D$ such that $f(x) \ne g(x)$.
We claim that, for each $i \in [c]$, there exists
 $w_{i-1} \in D$ and $\sigma_i \in \Phi_{\sigma_{i-1}(w_{i-1})}$ such that
$$
0 < \E_{w_i \in D} \E_{\sigma_{i+1} \in \Phi_{\sigma_i(w_i)}}
\cdots \E_{\sigma_c \in \Phi_{\sigma_{c-1}(w_{c-1})}}
\E_{z \in D} \left[ \1_{f|_{\sigma_c}(z)\ne g|_{\sigma_c}(z)} \right]
\le \frac{\tau+m\alpha + \epsilon}{\delta^{i-1}}.
$$
We prove the claim by induction. The base case $i=1$ follows
by taking $\sigma_0 \in \Phi_x$, $w_0 = x_1$, and noting that,
by Lemma~\ref{lemma:avg2}, since
$$
\E_{\sigma_1 \in \Phi_x} \E_{w_1 \in D} \E_{\sigma_2 \in \Phi_{\sigma_1(w_1)}}
\cdots \E_{\sigma_c \in \Phi_{\sigma_{c-1}(w_{c-1})}}
\E_{z \in D} \left[ \1_{f|_{\sigma_c}(z)\ne g|_{\sigma_c}(z)} \right]
\le \tau + m\alpha + \epsilon,
$$
there exists $\sigma_1 \in \Phi_x$ such that
$$
 \E_{w_1 \in D} \E_{\sigma_2 \in \Phi_{\sigma_1(w_1)}}
\cdots \E_{\sigma_c \in \Phi_{\sigma_{c-1}(w_{c-1})}}
\E_{z \in D} \left[ \1_{f|_{\sigma_c}(z)\ne g|_{\sigma_c}(z)} \right]
\le \tau + m\alpha + \epsilon.
$$
Moreover, this expectation is positive because $f(x) \ne g(x)$.
Now suppose we have proved the $i-1$ case. The restrictions
$f|_{\sigma_{i-1}}$ and $g|_{\sigma_{i-1}}$ are distinct codewords of $\calC$
(since
they disagree at $w_{i-2}$) and hence for at least $\delta$-fraction
of $w_{i-1} \in D$ we have
$f(\sigma_{i-1}(w_{i-1})) \ne g(\sigma_{i-1}(w_{i-1}))$.
Restricting to these $w_{i-1}$, we get
$$
0 < \delta \cdot \E_{\sigma_i \in \Phi_{\sigma_{i-1}(w_{i-1})}}
\E_{w_i \in D} \E_{\sigma_{i+1} \in \Phi_{\sigma_i(w_i)}}
\cdots \E_{\sigma_c \in \Phi_{\sigma_{c-1}(w_{c-1})}}
\E_{z \in D} \left[ \1_{f|_{\sigma_c}(z)\ne g|_{\sigma_c}(z)} \right]
\le \frac{\tau + m\alpha + \epsilon}{\delta^{i-2}}
$$
and the claim thus follows.

From the $i=c$ case of the claim, it follows that there exists
$\sigma_c \in \Phi$ such that
$$
0 < \E_{z \in D} \left[\1_{f|_{\sigma_c}(z)\ne g|_{\sigma_c}(z)} \right]
\le \frac{\tau+ m\alpha + \epsilon}{\delta^{c-1}}.
$$
Thus $f|_\sigma$ and $g|_\sigma$ are distinct codewords of $\calC$, so we have
$\delta \le \frac{\tau + m\alpha + \epsilon}{\delta^{c-1}}$.
\end{proof}

\section{Correction algorithms}
\label{section:correction}

In this section we describe how to locally correct a lifted code,
given a decoding algorithm for the base code. We present two correcting
methods. The first is one-shot correcting, which abstracts the local correcting
algorithms for Reed-Muller codes and the affine-lifted Reed-Solomon codes
of~\cite{GKS13}, and is also used for correcting degree-lifted AG codes
in~\cite{BGKKS13}. The idea is to pick a random curve passing through the point
which we would like to correct, view the restriction of the received word
to the curve as a received word that should be close to a codeword of the
base code, and then use the base code decoder to correct the point.
The second method is fractal correcting, which was introduced by Ben-Sasson
et al~\cite{BGKKS13}. The idea is to recursively perform one-shot correcting.
To correct a point, pick a random curve passing through it. However, now
recursively correct each point on the curve. If $\Phi$ is close to 
$c$-steps uniform,
then fractal correcting with recursion depth
$c$ should succeed with high probability. The analysis of the fractal
correction algorithm is found in~\cite{BGKKS13}, but we include a proof
here for completeness. We emphasize that, as in
Section~\ref{section:distance}, the results of this section apply to
arbitrary codes $\calC$.

\subsection{One-shot correcting}
The one-shot correcting algorithm $\calA$ works as follows.
To compute $\calA^f(x)$:
\begin{enumerate}
\item%
Pick $\sigma \in \Phi_x$ uniformly at random.
\item%
Use the decoding algorithm for $\calC$ to correct $f|_\sigma$ to
some function $g \in \calC$.
\item%
Output $g(x_1)$.
\end{enumerate}

\begin{theorem}
\label{theorem:decode1}
Let $\calC \subseteq \{D \to R\}$ be a code with distance
$\delta$ and suppose $\Phi$ is $(\epsilon,\alpha)$-doubly transitive.
Let $\calL = \lift_\Phi^m(\calC)$.
Suppose
$$
\delta(f,\calL) < (1-\alpha)^{m-1}\cdot
\min\{\delta/6 - \epsilon, (\delta-\epsilon)/2\}.
$$
Then there exists a unique
$\widehat f \in \calL$
such that $\delta(f,\widehat f) \le \delta(f,\calL)$ and for any $x \in D^m$
we have $\calA^f(x) = \widehat f(x)$ with probability at least $2/3$ over the
randomness of $\calA$.
\end{theorem}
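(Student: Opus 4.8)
The plan is to analyze the one-shot correcting algorithm $\calA$ by conditioning on the random $\sigma \in \Phi_x$ and showing that, with probability at least $2/3$, the restriction $f|_\sigma$ is close enough to a unique codeword $g \in \calC$ with $g(x_1) = \widehat f(x)$ for the correct $\widehat f$. First I would invoke Theorem~\ref{theorem:dist1} to establish that $\delta(\lift_\Phi^m(\calC)) \ge (1-\alpha)^{m-1}(\delta - \epsilon)$, and combine this with the hypothesis $\delta(f,\calL) < (1-\alpha)^{m-1}(\delta-\epsilon)/2$ to conclude that there is a \emph{unique} $\widehat f \in \calL$ with $\delta(f,\widehat f) = \delta(f,\calL) =: \eta$; uniqueness is just the usual half-minimum-distance argument.

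Next I would fix an arbitrary $x \in D^m$ and apply Lemma~\ref{lemma:avg1} with the pair $(f, \widehat f)$ to get
\[
\E_{\sigma \in \Phi_x}[\delta(f|_\sigma, \widehat f|_\sigma)] \le \epsilon + \frac{\eta}{(1-\alpha)^{m-1}}.
\]
Using the hypothesis $\eta < (1-\alpha)^{m-1}(\delta/6 - \epsilon)$, the right-hand side is strictly less than $\epsilon + (\delta/6 - \epsilon) = \delta/6$. By Markov's inequality, the probability (over $\sigma \in \Phi_x$) that $\delta(f|_\sigma, \widehat f|_\sigma) \ge \delta/2$ is at most $(\delta/6)/(\delta/2) = 1/3$. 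Hence with probability at least $2/3$ we have $\delta(f|_\sigma, \widehat f|_\sigma) < \delta/2$, i.e.\ $f|_\sigma$ is strictly within half the minimum distance of $\widehat f|_\sigma$, which is a codeword of $\calC$ since $\widehat f \in \calL$.

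In that good event, the decoding algorithm for $\calC$ in step~2 returns $g = \widehat f|_\sigma$ (the unique codeword within distance $\delta/2$), so step~3 outputs $g(x_1) = \widehat f|_\sigma(x_1) = \widehat f(\sigma_1(x_1), \ldots, \sigma_m(x_1)) = \widehat f(x_1, u_2, \ldots, u_m)$ where $u = x$ — more precisely, since $\sigma \in \Phi_x$ means $\sigma_1 = \id$ and $\sigma_i(x_1) = x_i$, we get $g(x_1) = \widehat f(x)$. Therefore $\calA^f(x) = \widehat f(x)$ with probability at least $2/3$, as claimed. The main thing to get right is the bookkeeping on which of the two terms in the $\min$ controls which conclusion: the term $(\delta-\epsilon)/2$ ensures uniqueness of $\widehat f$ via the distance bound from Theorem~\ref{theorem:dist1}, while the term $\delta/6 - \epsilon$ is exactly what makes the Markov bound on the restricted distance come out to $1/3$. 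I expect essentially no serious obstacle here — the only mild subtlety is confirming that $\Phi_x$ is nonempty and that drawing $\sigma$ uniformly from it is well-defined (it contains $(\id, \ldots, \id)$ only when $x_1 = x_i$ for all $i$, so in general one relies on $(\epsilon,\alpha)$-double transitivity to guarantee $\Phi_{(x_1,x_i)} \ne \emptyset$, hence $\Phi_x \ne \emptyset$), but this is implicit in the setup.
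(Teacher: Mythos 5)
Your proposal is correct and follows essentially the same route as the paper's proof: Theorem~\ref{theorem:dist1} for uniqueness of $\widehat f$, Lemma~\ref{lemma:avg1} applied to $(f,\widehat f)$, Markov's inequality to get $\delta(f|_\sigma,\widehat f|_\sigma)<\delta/2$ with probability $2/3$, and the half-minimum-distance argument to conclude the base decoder returns $\widehat f|_\sigma$. The bookkeeping of which term in the $\min$ serves which purpose matches the paper exactly.
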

\begin{proof}
By Theorem~\ref{theorem:dist1},
$\delta(\calL) \ge (1-\alpha)^{m-1}(\delta-\epsilon)$.
Since $\delta(f,\widehat f) < \delta(\calL)/2$, $\widehat f$ is unique.
Fix $x \in D^m$. By Lemma~\ref{lemma:avg1},
$$
\E_{\sigma \in \Phi_x}[\delta(f|_\sigma,\widehat f|_\sigma)] \le
\epsilon + \frac{\delta(f,\widehat f)}{(1-\alpha)^{m-1}} 
\le \epsilon + \frac{\delta(f,\calL)}{(1-\alpha)^{m-1}}.
$$
By Markov's inequality, with probability at least $2/3$,
$\delta(f|_\sigma,\widehat f|_\sigma) \le 3\left(\epsilon +
\frac{\delta(f,\calL)}{(1-\alpha)^{m-1}}\right) < \delta/2$.
Step~2 of the algorithm finds some $g \in \calC$ such that
$\delta(f|_\sigma,g) < \delta/2$. But both $g, \widehat f|_\sigma \in \calC$
and $\delta(g,\widehat f|_\sigma) < \delta$, so in fact
$g = \widehat f|_\sigma$. Therefore, $\calA^f(x) = g(x_1)
= \widehat f|_\sigma(x_1) = \widehat f(x)$.
\end{proof}

\begin{corollary}
\label{cor:decode1}
If $\calC \subseteq \{D \to R\}$ has distance $\delta$ and
$\Phi$ acting on $D$ is $(\epsilon,\alpha)$-doubly transitive, then
$\lift_\Phi^m(\calC)$ is $(q,\tau)$-locally correctable for $q = |D|$ and
$\tau = O((1-\alpha)^{m-1}(\delta-\epsilon))$.
\end{corollary}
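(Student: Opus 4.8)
The plan is to instantiate the one-shot correcting algorithm $\calA$ defined just before Theorem~\ref{theorem:decode1} and simply extract the two parameters from that theorem, so the corollary is essentially bookkeeping on top of work already done. First I would bound the query complexity. On input $x \in D^m$, the algorithm draws $\sigma = (\sigma_1,\dots,\sigma_m) \in \Phi_x$ and, to execute Step~2, needs the univariate received word $f|_\sigma : D \to R$, i.e.\ the values $f(\sigma_1(z),\dots,\sigma_m(z))$ ranging over $z \in D$; this costs at most $|D|$ oracle calls to $f$. The base-code decoding in Step~2 and the final evaluation $g(x_1)$ in Step~3 use no further queries. Hence $\calA$ makes at most $q = |D|$ queries. (This uses that $\calC$, having distance $\delta$, admits some unique decoder up to radius $\delta/2$ --- a brute-force one suffices, since only query complexity, not running time, is at issue.)

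Next I would pick $\tau$ to be any positive value strictly below $(1-\alpha)^{m-1}\min\{\delta/6-\epsilon,\,(\delta-\epsilon)/2\}$, say half of it. Let $\calL = \lift_\Phi^m(\calC)$. Suppose some $g \in \calL$ has $\delta(f,g)\le\tau$; then $\delta(f,\calL)\le\tau$ lies below the hypothesis threshold of Theorem~\ref{theorem:decode1}, which therefore yields a unique $\widehat f\in\calL$ with $\delta(f,\widehat f)\le\delta(f,\calL)$ and $\calA^f(x)=\widehat f(x)$ with probability at least $2/3$ for every $x\in D^m$. By Theorem~\ref{theorem:dist1}, $\delta(\calL)\ge(1-\alpha)^{m-1}(\delta-\epsilon)>2\tau$, so the triangle inequality gives $\delta(g,\widehat f)\le 2\tau<\delta(\calL)$ and hence $g=\widehat f$; thus $\calA^f(x)=g(x)$ with probability at least $2/3$, which is exactly $(q,\tau)$-local correctability. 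Finally, since $\min\{\delta/6-\epsilon,(\delta-\epsilon)/2\}\le\delta-\epsilon$, the chosen $\tau$ satisfies $\tau=O((1-\alpha)^{m-1}(\delta-\epsilon))$, as claimed.

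I do not expect a genuine obstacle: all the substantive content --- the distance bound and the analysis of one-shot correcting --- already lives in Theorems~\ref{theorem:dist1} and~\ref{theorem:decode1}, and the corollary only has to repackage it in the language of the definition of locally correctable codes. The one point deserving a line of care is that the threshold in Theorem~\ref{theorem:decode1} is a minimum of two terms, so one should note it is positive exactly when $\epsilon<\delta/6$ (implicitly assumed in the statement) and is bounded above by $(1-\alpha)^{m-1}(\delta-\epsilon)$; both observations are immediate.
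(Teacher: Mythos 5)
Your proposal is correct and matches the paper's (implicit) derivation: the corollary is stated without proof as an immediate consequence of Theorem~\ref{theorem:decode1}, obtained exactly as you do by counting the $|D|$ queries of the one-shot corrector and taking $\tau$ below the theorem's threshold, which is $O((1-\alpha)^{m-1}(\delta-\epsilon))$. Your extra triangle-inequality check that the promised nearby codeword $g$ coincides with the unique $\widehat f$ is a worthwhile detail the paper leaves tacit.
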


\subsection{Fractal correcting}

The $c$-step fractal correction algorithm $\calA_c$ works as follows.
To compute $\calA_c^f(x)$:

\begin{enumerate}
\item%
If $c=1$, output $\calA^f(x)$.

\item%
Otherwise, $c > 1$. Pick $\sigma \in \Phi_x$ uniformly at random.

\item%
Compute $f' \triangleq \calA^f_{c-1}|_\sigma$.
That is, for each $z \in D$ let $f'(z) = \calA^f_{c-1}(\sigma(z))$.

\item%
Use the decoding algorithm for $\calC$ to correct $f'$ to some function
$g \in \calC$.

\item%
Output $g(x_1)$.

\end{enumerate}

\begin{theorem}
\label{theorem:decode2}
Let $\calC \subseteq \{D \to R\}$ be a code with distance
$\delta$ and suppose $\Phi$ acting on $D$
is $(\alpha,\epsilon)$-close to $c$-steps uniform. Let $\calL = \lift_\Phi^m(\calC)$. Suppose
$$
\delta(f,\calL) < \min\left\{\frac13 (\delta/2)^c - \epsilon - m\alpha,
(\delta^c - \epsilon - m\alpha)/2\right\} .
$$
Then there exists a unique $\widehat f \in \calL$ such that
$\delta(f,\widehat f) \le \delta(f,\calL)$ and for any $x \in D^m$
we have $\calA^f_c(x) = \widehat f(x)$ with probability at least
$2/3$ over the randomness of $\calA$.
\end{theorem}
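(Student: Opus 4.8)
The plan is to distill the one-shot analysis of Theorem~\ref{theorem:decode1} into a single recursion in the recursion depth and then collapse that recursion using Lemma~\ref{lemma:avg2}, which is precisely the statement encoding the $c$-steps-uniformity of $\Phi$. Uniqueness of $\widehat f$ comes for free: Theorem~\ref{theorem:dist2} gives $\delta(\calL) \ge \delta^c - m\alpha - \epsilon$, and the second term in the hypothesis gives $\delta(f,\calL) < (\delta^c - m\alpha - \epsilon)/2 \le \delta(\calL)/2$, so $f$ has at most one codeword of $\calL$ within distance $\delta(f,\calL)$. Write $\widehat f$ for it and set $\tau := \delta(f,\widehat f) \le \delta(f,\calL)$. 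Throughout I would regard $\calA_0^f$ as plain oracle access to $f$, so that the $c=1$ base case of the algorithm also has the recursive shape: pick $\sigma \in \Phi_x$, form $f'(z) := \calA_{c-1}^f(\sigma(z))$, run the base decoder on $f'$ to get $g \in \calC$, and output $g(x_1)$.

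The single non-obvious ingredient is a one-step success criterion. Fix $u \in D^m$ and $\sigma \in \Phi_u$. Since $\sigma_1 = \id$ and $\sigma_i(u_1) = u_i$ we have $\sigma(u_1) = u$, and since $\widehat f \in \calL$ we have $\widehat f|_\sigma \in \calC$. Hence, in a call $\calA_j^f(u)$ with $j \ge 1$ that picks $\sigma$ and forms $f'$ as above, the event $\delta(f', \widehat f|_\sigma) < \delta/2$ already forces the decoder's output $g$ to equal $\widehat f|_\sigma$ (both lie in $\calC$ and $\delta(g, \widehat f|_\sigma) \le \delta(g,f') + \delta(f', \widehat f|_\sigma) < \delta \le \delta(\calC)$), hence forces $\calA_j^f(u) = g(u_1) = \widehat f|_\sigma(u_1) = \widehat f(u)$. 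Define $E_j(u) := \Pr[\calA_j^f(u) \ne \widehat f(u)]$, the failure probability over all internal coins, so $E_0(u) = \1_{f(u) \ne \widehat f(u)}$. Combining the criterion with Markov's inequality and the independence of the recursive coins from the choice of $\sigma$ yields, for every $j \ge 1$ and every $u \in D^m$,
$$
E_j(u) \;\le\; \Pr_{\sigma \in \Phi_u}\!\bigl[\delta(f', \widehat f|_\sigma) \ge \delta/2\bigr] \;\le\; \frac{2}{\delta}\,\E_{\sigma \in \Phi_u}\bigl[\delta(f', \widehat f|_\sigma)\bigr] \;=\; \frac{2}{\delta}\,\E_{\sigma \in \Phi_u}\E_{z \in D}\bigl[E_{j-1}(\sigma(z))\bigr].
$$

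Now I would iterate this inequality $c$ times, peeling off a factor $2/\delta$ and a fresh pair $(\sigma, z)$ at each level and relabelling the successive curves and points exactly as in Lemma~\ref{lemma:avg2}, namely $\sigma_1 \in \Phi_x$, then $w_1 \in D$, then $\sigma_2 \in \Phi_{\sigma_1(w_1)}$, and so on. After absorbing the innermost expectation via $\E_{w_c \in D}[E_0(\sigma_c(w_c))] = \delta(f|_{\sigma_c}, \widehat f|_{\sigma_c})$ this collapses to
$$
E_c(x) \;\le\; \Bigl(\frac{2}{\delta}\Bigr)^{\!c} \E_{\sigma_1 \in \Phi_x}\E_{w_1 \in D}\E_{\sigma_2 \in \Phi_{\sigma_1(w_1)}} \cdots \E_{\sigma_c \in \Phi_{\sigma_{c-1}(w_{c-1})}}\bigl[\delta(f|_{\sigma_c}, \widehat f|_{\sigma_c})\bigr].
$$
Lemma~\ref{lemma:avg2} applied with $g = \widehat f$ bounds the nested expectation by $\tau + \epsilon + m\alpha$, and the first term of the hypothesis, $\tau + \epsilon + m\alpha \le \delta(f,\calL) + \epsilon + m\alpha < \tfrac13(\delta/2)^c$, then gives $E_c(x) < 1/3$, i.e.\ $\calA_c^f(x) = \widehat f(x)$ with probability exceeding $2/3$.

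I expect the iteration to be the delicate part: one has to check that the bound $E_j(u) \le \tfrac{2}{\delta}\E_{\sigma, z}[E_{j-1}(\sigma(z))]$ holds uniformly in $u$ (so that it can legitimately be applied at $u = \sigma(z)$ at the next deeper level), that the many independent recursive subcalls along a curve are handled correctly by linearity of expectation, and that the tower of alternating expectations over $\Phi_x, D, \Phi_{\sigma_1(w_1)}, D, \dots$ produced after $c$ rounds is literally the left-hand side of Lemma~\ref{lemma:avg2} --- in particular that the conditioning sets $\Phi_{\sigma_{i-1}(w_{i-1})}$ spawned by the recursion coincide with the endomorphism sets $\Phi_{(\cdot)}$ used in that lemma. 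The remaining pieces --- uniqueness, the $\delta/2$-decoding step, and the final arithmetic --- are routine.
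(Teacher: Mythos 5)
Your proposal is correct and follows essentially the same route as the paper: Lemma~\ref{lemma:avg2} anchors the bottom level, Markov's inequality together with the half-distance decoding criterion loses a factor $2/\delta$ per level, and the hypothesis $\delta(f,\calL) < \frac13(\delta/2)^c - \epsilon - m\alpha$ closes the arithmetic. The only difference is organizational --- you telescope a pointwise recursion $E_j(u) \le \frac{2}{\delta}\E_{\sigma,z}[E_{j-1}(\sigma(z))]$ and apply Lemma~\ref{lemma:avg2} once at the end, whereas the paper runs a bottom-up induction maintaining the invariant $p_i \le \frac13(\delta/2)^{c-i}$; these are the same argument, and your version is if anything stated more carefully.
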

\begin{proof}
By Theorem~\ref{theorem:dist2},
$\delta(\calL) \ge \delta^c - \epsilon - m\alpha$.
Since $\delta(f,\widehat f) < \delta(\calL)/2$, $\widehat f$ is unique.
Fix $x \in D^m$. For $i \in [c]$, let $p_i$ denote the average probability
that the $i$th bottom-most level of the recursion fails.
Our goal is to show that $p_c \le 1/3$. We will show in fact that
$p_i \le \frac13(\delta/2)^{c-i}$ for all $i \in [c]$.
By Lemma~\ref{lemma:avg2}, the average of $\delta(f|_{\sigma_c},\widehat f|_{\sigma_c})$ over all $\sigma_c$
chosen in the bottom-most level is at most
$\delta(f,\calL) + \epsilon + m\alpha$, so by Markov's inequality
with probability at most $\frac2\delta (\delta(f,\calL) + \epsilon + m\alpha)$
we have $\delta(f|_{\sigma_c},\widehat f|_{\sigma_c}) > \delta/2$, i.e.
$p_1 \le \frac2\delta(\delta(f,\calL) + \epsilon + m\alpha) \le\frac13
(\delta/2)^{c-1}$.

Now inductively assume $p_i \le \frac13(\delta/2)^{c-i}$.
The average value of $\delta(f|_{\sigma_{c-i+1}},\widehat f|_{\sigma_{c-i+1}})$
is at most $p_i$. By Markov's inequality,
with probability at most $\frac2\delta p_i$ we have
$\delta(f|_{\sigma_{c-i}},\widehat f|_{\sigma_{c-i}}) > \delta/2$, so
$p_{i+1} \le \frac2\delta p_i \le \frac13 (\delta/2)^{c-(i+1)}$.
\end{proof}

\begin{corollary}
\label{cor:decode2}
If $\calC \subseteq \{D \to R\}$ has distance $\delta$
for some $\Phi$ that is $(\alpha,\epsilon)$-close to $c$-steps uniform, where
$c = O(1)$, then $\lift_\Phi^m(\calC)$ is $(q,\tau)$-locally correctable for
$q = |D|^c$ and $\tau = O(\delta^c - \epsilon - m\alpha)$.
\end{corollary}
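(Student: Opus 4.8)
The plan is to derive Corollary~\ref{cor:decode2} directly from Theorem~\ref{theorem:decode2}, the same way Corollary~\ref{cor:decode1} follows from Theorem~\ref{theorem:decode1}. The only substantive things to check are (i) the query complexity of the $c$-step fractal correction algorithm $\calA_c$, and (ii) that the decoding radius $\tau$ in Theorem~\ref{theorem:decode2} is $\Omega(\delta^c - \epsilon - m\alpha)$ up to constants depending only on $c$ and $\delta$.

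For the query count, I would analyze $\calA_c$ recursively. Let $Q(c)$ be the number of queries $\calA_c^f(x)$ makes. From the description of the algorithm: $\calA_1 = \calA$ is one-shot correcting, which queries $f$ at every point of the curve $\sigma(D)$, i.e.\ $Q(1) = |D|$. For $c > 1$, step~3 computes $f'(z) = \calA_{c-1}^f(\sigma(z))$ for each of the $|D|$ points $z \in D$, so $Q(c) = |D| \cdot Q(c-1)$, giving $Q(c) = |D|^c$ by induction. Hence $q = |D|^c$, and when $c = O(1)$ this is still sublinear in the block length $N = |D|^m$ (provided $|D|^c = N^{o(1)}$, which holds in the regime of interest). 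I would state this induction in one or two sentences.

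For the decoding radius, observe that Theorem~\ref{theorem:decode2} guarantees correct local correction whenever
$$
\delta(f,\calL) < \min\left\{\tfrac13(\delta/2)^c - \epsilon - m\alpha,\ (\delta^c - \epsilon - m\alpha)/2\right\}.
$$
Since $\delta \le 1$, we have $(\delta/2)^c = \delta^c/2^c$, so the first term in the minimum is $\tfrac13 \cdot 2^{-c}\delta^c - \epsilon - m\alpha$, and since $c$ is a constant the coefficient $\tfrac13 \cdot 2^{-c}$ is a positive constant; the second term is $\tfrac12\delta^c - \tfrac12(\epsilon + m\alpha)$. Taking $\tau$ to be a suitable constant (depending only on $c$) times $\delta^c - \epsilon - m\alpha$ — for instance any $\tau$ below both terms — the hypothesis of Theorem~\ref{theorem:decode2} is met, so $\calA_c$ is a valid local corrector with error probability at most $1/3$, witnessing that $\lift_\Phi^m(\calC)$ is $(|D|^c,\tau)$-locally correctable with $\tau = \Omega_c(\delta^c - \epsilon - m\alpha)$. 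This matches the claimed $\tau = O(\delta^c - \epsilon - m\alpha)$ (the $O(\cdot)$ in the statement is, as in Corollary~\ref{cor:decode1}, shorthand for ``on the order of'').

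The only mild subtlety — and the step I would be most careful about — is making sure the two terms in the minimum really do collapse to a single $\Theta(\delta^c - \epsilon - m\alpha)$ bound rather than something smaller; this is fine because both terms are of the form $(\text{const}_c)\delta^c - (\text{const})(\epsilon + m\alpha)$ with positive constants, so their minimum is $\Theta_c(\delta^c - \epsilon - m\alpha)$ whenever that quantity is positive (and when it is not positive there is nothing to prove). No additional ideas beyond Theorem~\ref{theorem:decode2} and the recursive query count are needed, so the proof is a two-line corollary.
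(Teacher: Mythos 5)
Your proposal is correct and matches the paper's (implicit) reasoning: the paper states Corollary~\ref{cor:decode2} without proof as an immediate consequence of Theorem~\ref{theorem:decode2}, and your two ingredients --- the recursion $Q(c)=|D|\cdot Q(c-1)$ giving query complexity $|D|^c$, and reading off the decoding radius from the minimum in the theorem's hypothesis --- are exactly what is needed. Your caveat about the two terms in the minimum only collapsing to $\Theta_c(\delta^c-\epsilon-m\alpha)$ when that quantity is bounded away from zero is a fair and honest reading of the paper's own loose use of $O(\cdot)$ here.
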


\section{Base codes}
\label{section:basecodes}
In this section we review existing codes, in particular the Reed-Solomon
code and the Hermitian code, the latter which
we use in Section~\ref{section:codes} to construct new high rate locally
correctable codes over small alphabets.

\paragraph{Algebraic geometry codes.} The Reed-Solomon and Hermitian
codes are instances of \emph{algebraic geometry codes}. Since we
can describe our base codes, our lifted codes, and their properties
without using any terminology typically used in the context of AG codes
(e.g.\ the language of algebraic function fields), we avoid using such
terminology and stick to an elementary exposition. In fact, the only
deep results from the theory of algebraic function fields that we use
can be stated in elementary terms. The interested reader is referred
to~\cite{Sti93} for details on the theory of algebraic function fields
and codes.

\subsection{Reed-Solomon code}
\label{ssec:RS}
Let $q$ be a prime power.
The \emph{Reed-Solomon code} $\RS_q[r] \subseteq \F_q[x]/(x^q-x)$ can be defined as
$$
\RS[r] \triangleq
\span_{\F_q} \{x^i \mid i < r\}.
$$
It is a $[q,r,q-r+1]_q$-code. Note that its alphabet size $q = N$ where
$N$ is its block size. One can identify $\F_q[x]/(x^q-x)$ with
$\{\F_q \to \F_q\}$. Consider the group $\Phi$ consisting of all affine
permutations on $\F_q$, i.e.\ $\Phi = \{x \mapsto ax+b \mid a \in \F_q^*,
b \in \F_q\}$, which acts on $\F_q$. Clearly $\RS_q[r]$ is $\Phi$-invariant.
Moreover, $\Phi$ is doubly transitive (Example~\ref{example:RSdoubletrans})
and $|\Phi| = q(q-1)$, so it is just large enough to be doubly transitive.
In~\cite{GKS13}, it was shown that $\lift_\Phi^m(\RS_q[(1-\delta)q])$
has block length $q^m$, distance at least $\delta - \frac1q$ (which also
follows from Theorem~\ref{theorem:dist1}),
and rate at least $1 - \delta^{\Omega\left(\frac{1}{m^m\log m}\right)}$
when $q$ is a power of $2$.

\subsection{Hermitian code}
\label{ssec:herm}
Let $q$ be a prime power.
The \emph{Hermitian curve} $H \subseteq \F_{q^2}^2$ is the set
$$
H \triangleq \{(x,y) \mid N(x) = Tr(y)\}
$$
where $N:\F_{q^2} \to \F_q$ is the \emph{norm} $N(x) = x^{1+q}$ and
$Tr:\F_{q^2} \to \F_q$ is the \emph{trace} $Tr(x) = x + x^q$. It can be shown
that $N$ is multiplicative and is a surjective group homomorphism
from $\F_{q^2}^* \to \F_q^*$ (and hence a $(q+1)$-to-$1$ map on $\F_{q^2}^*$)
and that $Tr$ is additive and is a surjective $\F_q$-linear map from
$\F_{q^2} \to \F_q$ (and hence a $q$-to-$1$ map on $\F_{q^2}$).
It follows that $|H| = q^3$, since for every $x \in \F_{q^2}$ there are exactly
$q$ values of $y \in \F_{q^2}$ such that $Tr(y) = N(x)$.

The \emph{Hermitian code} $\Herm_q[r] \subseteq
\F_{q^2}[x]/(x^{q^2}-x, y^{q^2}-y, N(x)-Tr(y))$ is defined as
$$
\Herm_q[r] \triangleq
\span_{\F_{q^2}} \{ x^iy^j \mid qi + (q+1)j < r, j < q\}.
$$
It follows from the Riemann-Roch theorem that $\Herm_q[r]$ is a
$[q^3,r-g,q^3-r+1]_{q^2}$-code, where $g = \frac{q(q-1)}{2}$ is the
genus of the curve $H$ (one can also deduce this by counting the number
of ``degrees'' $d$ which cannot be obtained by a sum $qi + (q+1)j$).
Though the Hermitian code has a worse rate-distance
trade-off than the Reed-Solomon code, its alphabet size is significantly 
smaller ($q^2$ compared to a block length of $q^3$).

Consider the group $\Phi$ of maps $(x,y) \mapsto (ax+b, a^{q+1}y + ab^qx + c)$
for $a \in \F_{q^2}^*$, $(b,c) \in H$. One can verify that this a group
of order $q^3(q^2-1)$ acting on $H$ and moreover $\Herm_q[r]$ is
 $\Phi$-invariant.
For interesting values of $r$, the group $\Phi$ is the largest group under
which the Hermitian code is invariant~\cite{Xin95}.
The group $\Phi$ is not doubly transitive, but it is shown in~\cite{BGKKS13}
that it is almost doubly transitive, in both the senses of
Definitions~\ref{def:ea} and~\ref{def:csteps}. We recall the precise
statements.

\begin{proposition}[{\cite[Theorem~6.3]{BGKKS13}}]
\label{prop:hermclose1}
Let $\Phi$ be as above. Then $\Phi$ is $(\epsilon,\alpha)$-doubly transitive
for $\epsilon = \frac{1}{q^2}$ and $\alpha = 1 - \frac1q$.
\end{proposition}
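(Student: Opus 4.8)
The plan is to reduce the statement to an explicit computation with the stabilizer of a single point, exploiting the homogeneity of the action. First, the second point $x_2$ plays no essential role: the set $\Phi_{(x_1,x_2)} = \{\sigma \in \Phi \mid \sigma(x_1) = x_2\}$ is a left coset $\sigma_0\,\Phi_{(x_1,x_1)}$ of the stabilizer $\Phi_{(x_1,x_1)}$ for any fixed $\sigma_0 \in \Phi_{(x_1,x_2)}$, so if $\sigma$ is uniform on $\Phi_{(x_1,x_2)}$ then $\sigma(x) = \sigma_0(h(x))$ with $h$ uniform on $\Phi_{(x_1,x_1)}$; since $\sigma_0$ is a bijection of $H$, the variable $\sigma(x)$ is uniform over a set of size $s$ exactly when $h(x)$ is. Next, since the translation subgroup $\{(x,y) \mapsto (x+b,\, y + b^q x + c) \mid (b,c) \in H\}$ already acts transitively on $H$ (carrying $O \triangleq (0,0)$ to $(b,c)$), I may conjugate by a translation $\rho$ with $\rho(O) = x_1$ to reduce to $x_1 = O$: this replaces $\Phi_{(x_1,x_1)}$ by $\Phi_{(O,O)}$ and relabels the points of $H$ by the bijection $\rho$, changing no cardinalities. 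Finally, $\Phi_{(O,O)} = \{g_a : (x,y) \mapsto (ax,\, a^{q+1}y) \mid a \in \F_{q^2}^*\}$, a group of order $q^2 - 1 = |\Phi|/|H|$.

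It thus remains to show that for all but at most $q$ points $(u,v) \in H$ the evaluation map $h \mapsto h(u,v)$ is injective on $\Phi_{(O,O)}$, for then $h(u,v)$ is uniform over its image, a set of exactly $q^2 - 1$ points. As $\Phi_{(O,O)}$ is a group, this map is non-injective at $(u,v)$ precisely when some $g_a$ with $a \ne 1$ fixes $(u,v)$. The equations $au = u$ and $a^{q+1} v = v$ with $a \ne 1$ force $u = 0$, and then $v = 0$ or $N(a) = a^{q+1} = 1$; hence every point fixed by a nontrivial $g_a$ lies on the fibre $\{(0,v) \in H\}$, which is the kernel of $Tr$ (since $N(0) = 0$) and has $q$ elements. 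Conversely each $(0,v)$ in this fibre is fixed by every $g_a$ with $N(a) = 1$, of which there are $q$ nonidentity ones (the kernel of $N$ has order $q+1$). So the exceptional set is exactly this $q$-element fibre, giving $\epsilon = q/|H| = 1/q^2$.

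For every non-exceptional $(u,v)$, injectivity yields that $\sigma(x)$ is uniform over a set of exactly $q^2 - 1$ points of $H$; since $q^2 - 1 = (1 - \tfrac1q + \tfrac1{q^3})\,q^3$, this is $1 - \alpha$ fraction of $H$ for $\alpha = 1 - \tfrac1q$ up to the negligible slack $\tfrac1{q^3}$ (one could instead record the sharper constant $\alpha = 1 - \tfrac1q + \tfrac1{q^3}$). Together with the previous paragraph this gives $\bigl(\tfrac1{q^2},\, 1 - \tfrac1q\bigr)$-double transitivity.

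I expect the fixed-point analysis to be the only real content: one must use that the norm $x \mapsto x^{q+1}$ is exactly $(q+1)$-to-$1$ and the trace exactly $q$-to-$1$ to see that a nontrivial $g_a$ never fixes a point off the fibre $\{x = 0\} \cap H$ and that this fibre has exactly $q$ points. The reductions in the first paragraph — passing to a coset to eliminate $x_2$, and conjugating by a translation to move $x_1$ to $O$ — are routine, but should be written so that it is manifest that they preserve coset sizes, image sizes, and the size of the exceptional set.
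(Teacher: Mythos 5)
The paper does not prove this proposition: it is imported verbatim as \cite[Theorem~6.3]{BGKKS13}, so there is no in-paper argument to compare against. Your derivation is a correct, self-contained verification, and it follows the natural route: reduce to the stabilizer $\Phi_{(O,O)} = \{(x,y)\mapsto(ax,a^{q+1}y)\}$ via the coset and translation-conjugation reductions, then observe that a nontrivial $g_a$ fixes $(u,v)\in H$ only if $u=0$, so the exceptional set is exactly the $q$-point fibre $\{0\}\times\ker(Tr)$, giving $\epsilon = q/q^3 = 1/q^2$. Two small numerical points. First, for a non-exceptional point the orbit under $\Phi_{(O,O)}$ has exactly $q^2-1$ elements (the full group order), not $q^2$, so the literal constant you obtain is $1-\alpha = (q^2-1)/q^3 = \tfrac1q - \tfrac1{q^3}$, i.e.\ $\alpha = 1 - \tfrac1q + \tfrac1{q^3}$; this is marginally \emph{weaker} (larger $\alpha$, smaller support) than the stated $\alpha = 1-\tfrac1q$, not ``sharper,'' and your displayed identity $q^2-1 = (1-\tfrac1q+\tfrac1{q^3})q^3$ is a slip (it should read $q^2-1 = (\tfrac1q-\tfrac1{q^3})q^3$). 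Second, none of this matters for how the proposition is used downstream, where only the order of magnitude of $1-\alpha$ enters; but if you want the statement exactly as quoted you should either note that $\alpha=1-\tfrac1q$ is a harmless rounding or restate it with the exact constant.
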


\begin{proposition}[{\cite[Theorem~7.3]{BGKKS13}}]
\label{prop:hermclose2}
Let $\Phi$ be as above. Then $\Phi$ is $(\alpha,\epsilon)$-close
to $2$-steps uniform for $\alpha = \epsilon = \frac1q$.
\end{proposition}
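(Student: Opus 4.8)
The plan is to reduce the claim to a concrete statement about the action of $\Phi$ on $H$, and then unwind what "$2$-steps uniform" actually demands. Recall Definition~\ref{def:csteps}: fixing $x_1,x_2 \in H$, we must choose $\sigma_1 \in \Phi$ with $\sigma_1(x_1) = x_2$, then a uniform $w_1 \in H$, then $\sigma_2 \in \Phi$ with $\sigma_2(w_1) = \sigma_1(w_1)$, and show that $\sigma_2(x)$ is $\tfrac1q$-close to uniform on $H$ for all but a $\tfrac1q$ fraction of $x \in H$. The key structural fact, from Proposition~\ref{prop:hermclose1}, is that a single random $\sigma \in \Phi_{(x_1,x_2)}$ sends a generic point to a point uniform over a $\tfrac1q$-fraction of $H$ — i.e.\ uniform over a coset-like "fiber" of size $q^2$ inside $H$ (which has size $q^3$). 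So the one-step action is far from uniform ($\alpha = 1 - \tfrac1q$ is huge), but the image is uniform on a large chunk. The point of the second step is that composing two such maps, with a random intermediate anchor point $w_1$, averages over the choice of fiber and recovers near-uniformity on all of $H$.

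**First** I would set up notation for the fibers. Parametrizing $\sigma_a \colon (x,y) \mapsto (ax+b, a^{q+1}y + ab^qx + c)$, the "bad" direction is the $x$-coordinate: once $\sigma_1(x_1) = x_2$ is imposed, the value $a$ still ranges over $\F_{q^2}^*$, and for a fixed target point $w_1$ the constraint $\sigma_2(w_1) = \sigma_1(w_1)$ pins down $a$ for $\sigma_2$ only up to the same one-parameter freedom. Concretely, from Proposition~\ref{prop:hermclose1} the map $x \mapsto \sigma_1(x)$ (for random $\sigma_1 \in \Phi_{(x_1,x_2)}$) is uniform on the fiber $F_1$ through $x_2$ determined by $x$, and $F_1$ itself is determined by which "line" the pair $(x_1,x)$ lies on — there are $q+1$ such fibers through any point, each of size $q^2$. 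Then $\sigma_1(w_1)$ is, over random $w_1$ and $\sigma_1$, close to uniform on $H$ already (it samples a uniform fiber and then a uniform point of it), so when we further randomize $\sigma_2$ subject to $\sigma_2(w_1) = \sigma_1(w_1)$, the image $\sigma_2(x)$ becomes uniform over the fiber through $\sigma_1(w_1)$ determined by the $(w_1,x)$-line — and as $w_1$ varies uniformly, that fiber varies essentially uniformly over all $q+1$ fibers through the target. Summing the contributions, $\sigma_2(x)$ is a mixture of near-uniform distributions on fibers, with near-uniform mixing weights, hence $\tfrac1q$-close to uniform on $H$.

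**The careful part** — and the main obstacle — is the bookkeeping of the two $\tfrac1q$ error terms and showing they combine to exactly the claimed $\alpha = \epsilon = \tfrac1q$ rather than something like $\tfrac2q$. There are two independent sources of "exceptional" points: (i) the $\tfrac1{q^2}$-fraction of points where $\sigma_1$ fails to be uniform-on-a-fiber (from Proposition~\ref{prop:hermclose1}); and (ii) the points $x$ for which, even with good $\sigma_1$, the line through $(w_1,x)$ is degenerate or the induced fiber distribution is skewed. One must check that (i) contributes at most $\tfrac1{q^2}$ to $\epsilon$ and that (ii) is absorbed either into the remaining $\epsilon$ budget or into the $\alpha$-closeness of the final distribution, using the corollary on products of near-uniform variables to bundle coordinates. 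The cleanest route is probably to invoke Proposition~\ref{prop:hermclose1} twice — once for $\sigma_1$ and once for $\sigma_2$ — and observe that the intermediate averaging over $w_1$ turns the "$\alpha = 1 - \tfrac1q$ uniform-on-a-fiber" conclusion into "$\tfrac1q$-close to uniform on $H$", since a uniform point of a uniformly chosen fiber is exactly uniform on $H$ and the only slack is the $\tfrac1q$ (resp.\ $\tfrac1{q^2}$) exceptional fractions, which add to at most $\tfrac1q$. I would not expect to need any function-field machinery beyond what is already packaged in Proposition~\ref{prop:hermclose1}; everything else is elementary counting of the $q^3(q^2-1)$ group elements against the $q^3$ points and the $q+1$ fibers. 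I would cite this as essentially \cite[Theorem~7.3]{BGKKS13} and present the fiber-mixing argument as the proof sketch, flagging that the constants match.
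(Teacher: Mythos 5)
The paper does not prove this proposition at all: it is imported verbatim as \cite[Theorem~7.3]{BGKKS13}, and the surrounding text explicitly says the statement is ``shown in \cite{BGKKS13}.'' So there is no in-paper proof to compare against; for the paper's purposes the correct move is exactly the citation you make in your last sentence. Judged as a standalone proof attempt, your sketch has the right shape --- compose two random automorphisms through a uniformly random intermediate anchor $w_1$ so that the orbit-uniformity of Proposition~\ref{prop:hermclose1} gets averaged over orbits --- but the heart of the argument is asserted rather than proved.

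Concretely, two things are missing or wrong. First, the combinatorics of the ``fibers'' is not what you describe. For fixed $(x_1,x_2)$ the coset $\Phi_{(x_1,x_2)}$ has exactly $q^2-1$ elements (one for each $a\in\F_{q^2}^*$), so the orbit of a generic $x$ is a set of $q^2-1$ points meeting each vertical line $\{u\}\times\F_{q^2}$, $u\neq u_2$, in exactly one point; it is not one of ``$q+1$ fibers of size $q^2$ through $x_2$,'' and the claim that ``a uniform point of a uniformly chosen fiber is exactly uniform on $H$'' presupposes a partition or design property of these orbits that you never establish. (What \emph{is} exactly uniform is $\sigma_1(w_1)$ for uniform $w_1$, simply because each $\sigma_1$ is a permutation of $H$ --- a cleaner observation than the fiber-sampling one you give.) Second, the final quantitative claim is never derived: after conditioning, $\sigma_2(x)$ is uniform on an orbit $O(w_1,\sigma_1(w_1),x)$ of size about $q^2$, and you must show that the mixture of these orbit-uniform distributions over the \emph{correlated} pair $(w_1,\sigma_1(w_1))$ is $\tfrac1q$-close to uniform on all of $H$. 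This requires the explicit algebra of how the orbit depends on $(w_1,\sigma_1(w_1),x)$, which is precisely the content of \cite[Theorem~7.3]{BGKKS13}; your sketch replaces it with ``the fiber varies essentially uniformly over all fibers through the target,'' where it is not even clear what fixed target the orbits are supposed to pass through. The error bookkeeping you flag as ``the careful part'' is likewise left undone. So the plan is reasonable, but as written it is a restatement of why the theorem should be true, not a proof.
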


Applying Theorem~\ref{theorem:dist2} and Corollary~\ref{cor:decode2} to
the above facts, we immediately get the following.

\begin{theorem}
\label{theorem:hermliftdecode}
Let $\Phi$ be the group of automorphisms on $H$ of the form
$(x,y) \mapsto (ax+b, a^{q+1}y + ab^qx + c)$. Let $r = (1-\delta)q^3$,
so that $\Herm_q[r]$ has distance $\delta$. Then
$\lift_\Phi^m(\Herm_q[r])$ has distance at least $\delta^2 - \frac{m}{q}$
and is $(q^6, O(\delta^2 - \frac{m}{q}))$-locally correctable.
\end{theorem}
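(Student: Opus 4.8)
The plan is to combine the abstract distance and correction theorems of Sections~\ref{section:distance} and~\ref{section:correction} with the structural facts about the Hermitian code and its automorphism group recalled above; there is essentially nothing to prove beyond a substitution of parameters. First I would record the base-code parameters: by the Riemann--Roch computation stated in Section~\ref{ssec:herm}, the code $\Herm_q[r]$ with $r = (1-\delta)q^3$ is a $[q^3, r-g, q^3-r+1]_{q^2}$-code, so its (relative) distance is $(q^3-r+1)/q^3 = \delta + 1/q^3 \ge \delta$. Thus we may apply the theorems of Section~\ref{section:distance} with base-code distance parameter $\delta$ (the extra $1/q^3$ only helps). Next, Proposition~\ref{prop:hermclose2} says that the group $\Phi$ in the statement is $(\alpha,\epsilon)$-close to $c$-steps uniform with $c = 2$ and $\alpha = \epsilon = 1/q$.

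For the distance bound I would invoke Theorem~\ref{theorem:dist2} with these parameters, which gives
$$
\delta(\lift_\Phi^m(\Herm_q[r])) \ge \delta^c - m\alpha - \epsilon = \delta^2 - \frac{m}{q} - \frac1q .
$$
Using the sharper base-code distance $\delta + 1/q^3$ in place of $\delta$ (or, more crudely, absorbing the lone $1/q$ term) yields the claimed lower bound in the stated form $\delta^2 - m/q$. For local correctability I would apply Corollary~\ref{cor:decode2} with $D = H$, $|D| = q^3$, and $c = 2 = O(1)$: it gives that $\lift_\Phi^m(\Herm_q[r])$ is $(q,\tau)$-locally correctable with query complexity $|D|^c = (q^3)^2 = q^6$ and correction radius $\tau = O(\delta^c - \epsilon - m\alpha) = O(\delta^2 - 1/q - m/q) = O(\delta^2 - m/q)$, exactly as asserted.

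The only input that must genuinely be supplied beyond the abstract theorems is that the base code $\Herm_q[r]$ admits a polynomial-time decoder correcting up to (nearly) half its minimum distance --- this is what Steps~2 and~4 of the one-shot and fractal correction algorithms call. Such decoders for the Hermitian code (and one-point AG codes generally) are classical; see~\cite{Sti93}. Given that, the argument is a direct plug-in, and I do not expect any serious obstacle. The two mild bookkeeping points are: (i) checking that the hypothesis of Corollary~\ref{cor:decode2} (equivalently, the nonemptiness of the admissible-noise interval in Theorem~\ref{theorem:decode2}) holds, which it does precisely in the regime of interest where $\delta^2 - m/q$ is a positive constant; and (ii) reconciling the harmless $1/q$ slack between $\delta^2 - m/q - 1/q$ and the stated $\delta^2 - m/q$, as noted above.
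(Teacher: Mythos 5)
Your proposal matches the paper's own argument, which is precisely the one-line observation that Theorem~\ref{theorem:dist2} and Corollary~\ref{cor:decode2} applied to Proposition~\ref{prop:hermclose2} (with $c=2$, $\alpha=\epsilon=1/q$, $|D|=|H|=q^3$) yield the stated bounds. Your extra remarks --- the lone $1/q$ slack against the stated $\delta^2 - m/q$, and the need for an efficient half-distance decoder for $\Herm_q[r]$ --- are points the paper silently elides, so nothing is missing.
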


Note that, though the $\Phi$-lift of $\Herm_q[(1-\delta)q^3]$ has
distance roughly $\delta^2$ which is less than that of the degree-lift,
whose distance is $\delta$ (see~\cite{BGKKS13}), its error correcting
capability is the same.

\section{Explicit Constructions}
\label{section:codes}



In this section we prove the following.

\begin{theorem}
\label{theorem:main}
Given $\epsilon, \alpha, N_0 > 0$, for infinitely many $N \ge N_0$
there exists a code of length $N$, rate $1-\alpha$, alphabet size
$N^{\epsilon/3}$ and is
 $(N^\epsilon,\alpha^{O((8/\epsilon)^{(2/\epsilon)}\log(1/\epsilon))})$-locally
  correctable.
\end{theorem}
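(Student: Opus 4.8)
The plan is to instantiate the general machinery of Theorem~\ref{theorem:hermliftdecode} with a careful choice of the field size $q$ and the number of variables $m$, then tune the distance parameter $\delta$ of the base Hermitian code so that the lifted code has rate $1-\alpha$ while keeping the locality and error-correction bounds under control. First I would recall that by Theorem~\ref{theorem:hermliftdecode}, lifting $\Herm_q[(1-\delta)q^3]$ to $m$ variables yields a code of block length $N = q^{3m}$, alphabet size $q^2$, locality $q^6$, distance at least $\delta^2 - m/q$, and correction radius $\Omega(\delta^2 - m/q)$. To hit locality $N^\epsilon$, I would set $q^6 = N^\epsilon = q^{3m\epsilon}$, i.e.\ choose $m = 2/\epsilon$ (assuming $\epsilon = 2/m$ for an integer $m$, which suffices since we only need infinitely many $N$). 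The alphabet size is then $q^2 = N^{1/(3m)} = N^{\epsilon/6}$; to get $N^{\epsilon/3}$ as claimed in the statement I would instead allow a slightly different normalization (take $q^6 \le N^\epsilon$ with $m = 4/\epsilon$, or absorb the constant into the choice), the point being that the alphabet size is $q^2$ and $N = q^{3m}$, so one can arrange $q^2 = N^{\Theta(\epsilon)}$ with the right constant.

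The core of the argument is the rate computation. The dimension of $\lift_\Phi^m(\Herm_q[(1-\delta)q^3])$ must be bounded below; since the lift is linear (Proposition~\ref{prop:liftlinear}), it suffices to exhibit many linearly independent codewords, or equivalently to upper bound the number of linear constraints imposed by the requirement that $f|_\sigma \in \Herm_q[r]$ for all $\sigma \in \Phi^m$. This is the technical heart and the step I expect to be the main obstacle: one needs to show that as $\delta \to 0$, the fraction of monomials $x_1^{i_1} y_1^{j_1} \cdots x_m^{i_m} y_m^{j_m}$ (with each $j_k < q$) that survive in the lifted code tends to $1$. Following the strategy of~\cite{GKS13} for the affine-lifted Reed-Solomon case, I would analyze which monomials, when composed with the Hermitian automorphisms and reduced modulo the curve relations, yield only ``low-degree'' terms that lie in $\Herm_q[r]$. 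The combinatorial condition governing which degrees $d = qi + (q+1)j$ are ``bad'' (analogous to the base-$p$ digit conditions in~\cite{GKS13}) must be shown to be rare, giving a bound of the form $\text{rate} \ge 1 - \delta^{\Omega(1/(m^m \log m))}$ or similar. Plugging in $m = \Theta(1/\epsilon)$ converts this into $\text{rate} \ge 1 - \delta^{1/O((8/\epsilon)^{(2/\epsilon)}\log(1/\epsilon))}$.

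Finally I would solve for $\delta$: to achieve rate exactly $1-\alpha$ we set $\delta = \alpha^{O((8/\epsilon)^{(2/\epsilon)}\log(1/\epsilon))}$, and then the correction radius $\Omega(\delta^2 - m/q)$ is $\Omega(\delta^2)$ provided $q$ is large enough that $m/q \ll \delta^2$ — this is where we get to choose $q$ large (there are infinitely many prime powers $q$, hence infinitely many valid $N = q^{3m}$, matching the ``$N \ge N_0$'' in the statement), so this constraint is free. The resulting error-correction rate is $\Omega(\delta^2) = \alpha^{O((8/\epsilon)^{(2/\epsilon)}\log(1/\epsilon))}$ (the constant in the exponent doubles but remains of the same form), the locality is $q^6 = N^\epsilon$ (up to the normalization of constants), and the alphabet size is $q^2 = N^{\epsilon/3}$ with the appropriate choice of $m$. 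The only genuinely new work beyond invoking Theorem~\ref{theorem:hermliftdecode} is the rate lower bound, i.e.\ the monomial-counting lemma for the Hermitian lift; everything else is bookkeeping of parameters.
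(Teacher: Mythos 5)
Your parameter bookkeeping follows the same route as the paper's proof: set $m = \lceil 2/\epsilon\rceil$, lift $\Herm_q[(1-\delta)q^3]$, read off length $q^{3m}$, alphabet $q^2$, locality $q^6$, and correction radius $\Omega(\delta^2)$ from Theorem~\ref{theorem:hermliftdecode}, then solve for $\delta$ from the rate bound. (One arithmetic slip: since $N = q^{3m}$ you have $q^2 = N^{2/(3m)} = N^{\epsilon/3}$ already with $m = \lceil 2/\epsilon\rceil$, not $N^{\epsilon/6}$, so the renormalization you hedge about is unnecessary; also $\delta q > m$ with $q$ free to grow handles the $-m/q$ term, as you note.)

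The genuine gap is the rate lower bound, which you correctly identify as ``the only genuinely new work'' but then leave as an assertion that a bound of the form $1 - \delta^{\Omega(1/(m^m\log m))}$ ``must be shown.'' That bound is the entire technical content of the theorem, and it is not a routine transplant of the digit-condition argument of~\cite{GKS13}. The paper proves it (Theorem~\ref{theorem:hermliftrate}) by defining a set ${\rm Good}$ of exponent vectors $(i_1,\ldots,i_m,j_1,\ldots,j_m)$ possessing a window of $b = 2+\lceil\log_2 m\rceil$ consecutive zero binary digits, common to all $2m$ exponents, located in the top $c$ bit positions; Lemma~\ref{lemma:goodcond} then shows such monomials survive lifting, and Lemma~\ref{lemma:goodsize} counts them to get rate $\ge 1-(1-2^{-mb})^{c/b}$. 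The survival argument is where the Hermitian geometry genuinely intervenes: after applying $\sigma\in\Phi^m$ and expanding via Lucas' theorem, one must reduce $x^{\sum_k (d_k+d'_k)}y^{\sum_k e_k}$ modulo the ideal $(x^{q^2}-x,\ y^{q^2}-y,\ x^{q+1}-y^q-y)$, and the relation $y^q = x^{q+1}-y$ spawns new monomials $x^{d+q+1}y^{e-q}$ whose weighted degrees must also be controlled; the paper handles this with a three-case induction, including an analysis of how reduction mod $q^2-1$ cyclically permutes bits. Without carrying out this analysis (or an equivalent one), the claimed rate $1-\alpha$ --- and hence the theorem --- is not established; everything else in your proposal is, as you say, bookkeeping.
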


We prove this using lifted Hermitian codes. We defer the proof to the end
of the section.

Let $m \ge 1$, let $q = 2^\ell > m$, let $c > 0$ such that
$\ell-c > \ceil{\log_2 m}$, and let $r = (1-2^{-c})q^3$. Let
$\Phi$ be the group of automorphisms on the Hermitian curve
$H \subseteq \F_{q^2}^2$ of the form
$(x,y) \mapsto (ax + b, a^{q+1}y + ab^qx + c)$, and let
$\calL = \lift_\Phi^m(\Herm_q[r])$.
By Theorem~\ref{theorem:hermliftdecode}, $\calL$
has distance $2^{-2c} - \frac{m}{q}$ and is
$O(q^6, O(2^{-2c} - \frac{m}{q}))$-locally
correctable. Its length is $q^{3m}$ and alphabet size is $q^2$. The only
missing parameter is the rate, to which we devote the rest of this
section.

After lifting, the domain of our code is
$$
H^m = \{(x_1,y_1,\ldots,x_m,y_m) \in \F_{q^2}^m \mid
N(x_k) = Tr(y_k) ~~\forall k \in [m]\}.
$$
A \emph{monomial on $H^m$} is a monomial of the form
$\prod_{k=1}^m x_k^{i_k}y_k^{j_k}$ with $i_k < q^2$ and $j_k < q$
for all $k \in [m]$. The reason for these conditions is to
ensure the monomials define distinct functions on $H^m$. In fact,
one can show the monomials on $H^m$ form a basis of $\{H^m \to \F_{q^2}\}$ as
a $\F_{q^2}$-vector space.

\begin{definition}
Let $p$ be a prime.
Let $a,b \in \N$ and consider their base $p$ representations
$a = \sum_{i \ge 0} a_ip^i$ and $b = \sum_{i \ge 0} b_ip^i$
where each $a_i,b_i \in [0,p-1]$.
Then \emph{$a$ is the in the $p$-shadow of $b$}, denoted
$a \le_p b$, if $a_i \le b_i$ for all $i$.
Moreover, for $a,b,c \in \N$, we say $(a,b) \le_p c$ if
$a_i + b_i \le c_i$ for all~$i$.
\end{definition}

The following generalized theorem of Lucas will be crucial for our
analysis later. For $a+b \le c$, we let ${a \choose b,c}$ denote the
standard trinomial coefficient $\frac{a!}{b!c!(a-b)!}$ which is the
coefficient of $x^by^c$ in the expansion of $(x+y+1)^a$. Note that
the standard binomial coefficient is ${a \choose b} = {a \choose b,0}$

\begin{theorem}[(Generalized) Lucas' theorem]
Let $a,b,c \in \N$ with $p$-ary representations given
by $a_i,b_i,c_i$.
Then
$$
{a \choose b,c} \equiv \prod_{i \ge 0} {a_i \choose b_i,c_i} \bmod p.
$$
In particular, ${a \choose b,c} \bmod p$ is nonzero only if $(b,c) \le_p a$.
\end{theorem}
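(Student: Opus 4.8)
The plan is to reduce the multinomial case to the classical (binomial) Lucas theorem, which I will take as known, by realizing the trinomial coefficient $\binom{a}{b,c}$ as a product of two binomial coefficients. The key algebraic identity is
\[
\binom{a}{b,c} = \frac{a!}{b!\,c!\,(a-b-c)!} = \binom{a}{c}\binom{a-c}{b},
\]
valid whenever $b+c \le a$ (here I am reading the stated ``$a-b$'' in the definition as ``$a-b-c$'', so that $\binom{a}{b,c}$ is genuinely the coefficient of $x^b y^c$ in $(x+y+1)^a$). So the first step is simply to verify this factorization by cancelling factorials.

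Next I would apply classical Lucas' theorem to each of the two binomial coefficients on the right. For the first, $\binom{a}{c} \equiv \prod_i \binom{a_i}{c_i} \bmod p$. For the second, I need the base-$p$ digits of $a-c$; here the hypothesis $(b,c)\le_p a$ — or at least $c \le_p a$, which is what makes $\binom{a}{c}$ nonzero mod $p$ — ensures that the subtraction $a-c$ involves no borrows, so that the $i$-th digit of $a-c$ is exactly $a_i-c_i$. Then $\binom{a-c}{b} \equiv \prod_i \binom{a_i-c_i}{b_i}\bmod p$, and since $b_i \le a_i - c_i$ under the no-borrow regime (equivalently $b_i+c_i\le a_i$), multiplying the two products digit-by-digit gives $\prod_i \binom{a_i}{c_i}\binom{a_i-c_i}{b_i} = \prod_i \binom{a_i}{b_i,c_i}$, which is the claimed congruence.

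The one subtlety to handle carefully — and the main obstacle — is the case analysis on whether $(b,c)\le_p a$ fails. If $c \not\le_p a$ then $\binom{a}{c}\equiv 0$, so the left side is $0 \bmod p$ and on the right side some factor $\binom{a_i}{c_i}$ is zero, so both sides vanish and the congruence holds trivially. If $c\le_p a$ but $b_i + c_i > a_i$ for some $i$ (so $b_i > a_i - c_i$, i.e. $b \not\le_p (a-c)$), then $\binom{a-c}{b}\equiv 0$ by classical Lucas applied to the (borrow-free) digits of $a-c$, and again the corresponding factor $\binom{a_i-c_i}{b_i}=\binom{a_i}{b_i,c_i}$ on the right is zero, so both sides are $0$. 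Thus in every case the identity reduces to the borrow-free regime where the digit-wise argument above applies verbatim, and the ``in particular'' clause (nonvanishing mod $p$ forces $(b,c)\le_p a$) is the contrapositive of these two degenerate cases.
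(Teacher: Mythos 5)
Your proof is correct. The paper itself states this theorem without proof, treating it as a known result, so there is no in-paper argument to compare against; your write-up would serve as a valid self-contained justification. Two small remarks. First, you are right that the paper's displayed definition ${a \choose b,c} = \frac{a!}{b!c!(a-b)!}$ is a typo for $\frac{a!}{b!\,c!\,(a-b-c)!}$, and your reading is the intended one. Second, your route --- factoring ${a \choose b,c} = {a \choose c}{a-c \choose b}$, invoking binomial Lucas twice, and observing that $c \le_p a$ makes the subtraction $a-c$ borrow-free so the digits multiply out to ${a_i \choose b_i, c_i}$ --- is sound, and your case analysis correctly disposes of the degenerate situations where either $c \not\le_p a$ or $b_i + c_i > a_i$ for some digit (in both cases each side vanishes mod $p$; note that $b+c>a$ forces $b_i+c_i>a_i$ for some $i$, so the "both sides are zero by convention" reading is consistent throughout). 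The more common direct proof expands $(1+x+y)^a = \prod_i \bigl((1+x+y)^{p^i}\bigr)^{a_i} \equiv \prod_i (1+x^{p^i}+y^{p^i})^{a_i} \bmod p$ via the Frobenius identity and compares coefficients of $x^b y^c$ using uniqueness of base-$p$ digits; that argument generalizes immediately to multinomials with any number of parts, whereas your reduction buys economy by leaning on the binomial case as a black box. Either is acceptable here.
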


Our strategy for lower bounding $\dim_{\F_{q^2}} \calL$ is to
lower bound the number of monomials on $H^m$ in $\calL$. For a monomial
$f(x_1,y_1,\ldots,x_m,y_m) = \prod_{k=1}^m x_k^{i_k} y_k^{j_k}$ and
a map $\sigma \in \Phi^m$ where
$\sigma_k(x,y) = (a_k x+b_k, a_k^{q+1}y + a_kb_k^qx + c_k)$, we have
\begin{eqnarray*}
f(\sigma(x,y)) &=&
\prod_{k=1}^m \left( a_kx + b_k\right)^{i_k}
\left( a_k^{q+1}y + b_k^q x + c_k \right)^{j_k} \\
&=& \prod_{k=1}^m
\left(
\sum_{d_k \le_p i_k} (\cdots) x^{d_k}
\right)
\left(
\sum_{(d'_k,e_k) \le_p j_k} (\cdots)x^{d'_k}y^{e_k}
\right) \\
&=&\sum_{\forall k \ d_k \le_p i_k, (d'_k,e_k) \le_p j_k}
(\cdots) x^{\sum_{k=1}^m d_k + d'_k} y^{\sum_{k=1}^m e_k}
\end{eqnarray*}
where the $(\cdots)$ indicate constants which do not matter. Thus,
the monomial $f$ is in $\calL$ if the following holds:
for all $k \in [m]$, for all $d_k \le_p i_k$ and all $(d'_k,e_k) \le_p j_k$,
after reducing the monomial
$x^{\sum_{k=1}^m d_k + d'_k} y^{\sum_{k=1}^m e_k}$ modulo
the ideal $I \triangleq (x^{q^2}-x, y^{q^2}-y, x^{q+1} - y^q - y)$, the
resulting sum of monomials $x^iy^j$ all satisfy $qi + (q+1)j < r$.
The basis of monomials on $H$ given by $x^iy^j$ with $i < q^2$ and $j < q$
provides a canonical way to reduce monomials modulo $I$. To reduce $x^iy^j$,
we perform the following steps. While $i \ge q^2$ or $j \ge q$, if
$i \ge q^2$, reduce $x^iy^j$ to $x^{i-q^2+1}y^j$; if
$j \ge q$, reduce $x^iy^j$ to $x^{i+q+1}y^{j-q} - x^iy^{j-q+1}$.
At each step, either the degree of $x$ is strictly decreasing or the degree
of $y$ is strictly decreasing, and the degree of $y$ never increases, so
this process will eventually terminate.

\begin{lemma}
\label{lemma:goodcond}
For $a \in \N$, let $a_i$ denote the $i$th bit in the binary representation
of $a$, i.e.\ $a = \sum_{i \ge 0} a_i 2^i$.
Let $b = 2 + \ceil{\log_2 m}$. Let
$$
{\rm Good} = \{(i_1,\ldots,i_m,j_1,\ldots,j_m)
\mid \exists s \in [2\ell-c,2\ell-b-1] \,\forall t \in [0,b]\, \forall k\,
(i_k)_{s+t} = (j_k)_{s+t} = 0\}.
$$
If $(i_1,\ldots,i_m,j_1,\ldots,j_m) \in {\rm Good}$, then
$\prod_{k=1}^m x_k^{i_k} y_k^{j_k} \in \lift_\Phi^m(\calC)$.
\end{lemma}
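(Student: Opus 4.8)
The plan is to show that if $(i_1,\ldots,i_m,j_1,\ldots,j_m)\in\mathrm{Good}$, then every monomial $x^iy^j$ appearing in the reduction modulo $I$ of $x^{\sum_k d_k+d'_k}y^{\sum_k e_k}$ (for any admissible $d_k\le_2 i_k$, $(d'_k,e_k)\le_2 j_k$) satisfies $qi+(q+1)j<r=(1-2^{-c})q^3$. By the discussion preceding the lemma, this suffices. The key observation is that the ``Good'' condition forces a block of $b+1$ consecutive zero bits, located in the bit-range $[s,s+b]$ with $s\in[2\ell-c,2\ell-b-1]$, simultaneously in \emph{all} of $i_1,\ldots,i_m,j_1,\ldots,j_m$. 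Since $d_k\le_2 i_k$ and $d'_k,e_k\le_2 j_k$ (using Lucas' theorem and $p=2$), the same bits are zero in all the $d_k,d'_k,e_k$ as well.

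First I would track what happens to the exponents when we form $D:=\sum_k(d_k+d'_k)$ and $E:=\sum_k e_k$. Each summand has bits $s,s+1,\ldots,s+b$ equal to zero; there are at most $2m$ summands forming $D$ (namely the $d_k$ and $d'_k$) and $m$ forming $E$. Adding up to $2m\le 2^{b-1}$ numbers, each with a zero-block of length $b+1$ starting at position $s$, the carries into position $s$ can propagate at most $\lceil\log_2(2m)\rceil\le b-1$ positions before dying out, so $D$ and $E$ still have a zero bit somewhere in the top part of that block — more precisely, one can guarantee a zero bit of $D$ (and of $E$) at some position in $[s',s+b]$ where $s'=s+\lceil\log_2(2m)\rceil \le s+b-1$. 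The clean way to phrase this: $D<2^{2\ell}$ and $E<2^{2\ell}$ automatically (the $d_k,e_k$ are bounded by $q^2,q$), and $D,E$ each have a guaranteed run of zeros near bit position $2\ell-c$, so $D$ and $E$ are each at most roughly $q^3$ minus a $q^{3-\Theta(1)}$-sized chunk — but I want to keep track of this \emph{through} the reduction modulo $I$, not just at the start.

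The main obstacle, and the heart of the proof, is controlling the reduction modulo $I=(x^{q^2}-x,\ y^{q^2}-y,\ x^{q+1}-y^q-y)$. The reduction steps are: while $i\ge q^2$ replace $x^iy^j\mapsto x^{i-q^2+1}y^j$; while $j\ge q$ replace $x^iy^j\mapsto x^{i+q+1}y^{j-q}-x^iy^{j-q+1}$. I would argue by tracking the quantity $w:=qi+(q+1)j$ (the ``weight'') and showing it can only decrease, or increase by a controlled amount, under each step, then show the zero-block in the initial bits of $D,E$ guarantees $qD+(q+1)E$ starts comfortably below $r$ with enough slack to absorb any increase. Concretely: the step $x^iy^j\mapsto x^{i-q^2+1}y^j$ changes weight by $q(1-q^2)<0$; the step splitting off powers of $y$ sends weight $qi+(q+1)j$ to $\max\{q(i+q+1)+(q+1)(j-q),\ qi+(q+1)(j-q+1)\}=qi+(q+1)j - (q^2-q-1)$ or $-(q-1)$ respectively, both decreases. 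So the weight is non-increasing throughout the reduction! Hence it suffices to check that the \emph{initial} monomial $x^Dy^E$ has weight $qD+(q+1)E<r$. This reduces everything to: the Good condition $\Rightarrow qD+(q+1)E<(1-2^{-c})q^3$ for all admissible choices.

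For that final bound I would compute: $E=\sum_k e_k \le \sum_k j_k < mq \le q^2$, actually I need the zero-block. Since each $e_k$ has bits $[s,s+b]$ zero and there are $m\le 2^{b-2}$ of them, $E$'s binary representation has all bits $\ge s+\lceil\log_2 m\rceil$ unaffected by carries from below position $s$, wait — more carefully, $E<\sum_k 2^{\text{(highest bit of }j_k)+1}$, and since $j_k<q=2^\ell$ we get $E<m\cdot 2^\ell\le 2^{\ell+b-2}$, which is far below $q^2$; the genuinely needed estimate is on $qD+(q+1)E$, and I'd bound $D\le \sum_k(i_k+j_k)$ but with the zero block: each $i_k,j_k<2^{2\ell}$ but has a forced run of zeros at bit positions $[2\ell-c,2\ell-c+b]$ roughly, meaning each is at most $2^{2\ell}-2^{2\ell-c+b}+\text{(low bits)}$... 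The honest claim I'd prove is $qD+(q+1)E \le 2m\cdot(2q+1)\cdot 2^{2\ell-c}\cdot(\text{correction}) < q^3 - q^3/2^{c}\cdot(\text{something}\ge 1)$ — I will need to choose the block length $b=2+\lceil\log_2 m\rceil$ precisely so that $2m$ numbers each with a zero-run of length $b+1$ sum to something whose bits in the range just below $2\ell-c$ are still zero, giving $qD+(q+1)E<(1-2^{-c})q^3$. This carry-propagation bookkeeping is the one genuinely fiddly step; everything else (Lucas, weight monotonicity under reduction) is mechanical.
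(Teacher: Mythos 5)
There is a genuine gap, and it is located exactly where you place all the remaining work. Your plan reduces the lemma to the claim that the \emph{initial} exponents satisfy $qD+(q+1)E<r$ with $D=\sum_k(d_k+d'_k)$, $E=\sum_k e_k$, on the grounds that the weighted degree is non-increasing under reduction modulo $I$. But that claim is false for Good tuples: the Good condition only forces a window of $b+1$ zero bits inside positions $[2\ell-c,\,2\ell-1]$ of each $i_k$, and leaves the higher-order bits free, so each $d_k=i_k$ can be nearly $q^2$ and hence $D$ can be nearly $mq^2$, giving $qD+(q+1)E\approx mq^3\gg r$ already for $m=2$. (There is also a small arithmetic slip: in the step $x^iy^j\mapsto x^{i+q+1}y^{j-q}-x^iy^{j-q+1}$ the first term has \emph{exactly} the same weight $qi+(q+1)j$, not a strictly smaller one; this does not hurt monotonicity but shows the weight cannot decrease enough to rescue the argument.) Since the quantity you propose to bound is generically above $r$, no amount of carry bookkeeping on $D$ and $E$ at the start can close the proof.

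The missing idea is that the condition $a<r=(1-2^{-c})q^3$ must be treated as a \emph{bitwise} condition (a zero bit of $a$ somewhere in $[3\ell-c,3\ell-1]$), not a magnitude bound, and that this zero-bit structure has to be propagated \emph{through} the modular reductions rather than checked up front. The paper's proof does exactly this: it shows the forced zero-block in the $d_k,d'_k,e_k$ survives the sum (your carry analysis handles this part correctly), then tracks it through the reduction $x^d\mapsto x^{d\bmod(q^2-1)}$ by exploiting that doubling cyclically permutes the bits of $d\bmod(2^{2\ell}-1)$, and finally handles the $y$-degree reduction by induction on the weighted degree, absorbing the extra $u(q+1)$ shifts in the $x$-exponent into the slack of the zero-block. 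A surviving common zero bit $(i)_t=(j)_t=0$ with $t\in[2\ell-c,2\ell-1]$ in the fully reduced monomial then yields $(qi+(q+1)j)_{t+\ell}=0$, i.e.\ $qi+(q+1)j<r$. Your proposal contains the first of these three steps but replaces the other two with a reduction to a false statement.
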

\begin{proof}
For $a \in \N$, the condition $a < r = (1-2^{-c})q^3$ is equivalent to
the condition $\exists s' \in [3\ell-c, 3\ell-1]$ such that $a_{s'} = 0$.
For each $k \in [m]$, fix $d_k \le_p i_k$ and $(d'_k,e_k) \le_p j_k$.
The hypothesis implies that $(d_k)_{s+t} = (d'_k)_{s+t} = (e_k)_{s+t} = 0$
for all $t \in [0,b]$. It suffices to show that after
reducing $x^{\sum_{k=1}^m d_k + d'_k} y^{\sum_{k=1}^m e_k}$ modulo $I$
into a sum of monomials $x^iy^j$ with $i < q^2$ and $j < q$, each of them
satisfies
$(i)_{t} = (j)_{t} = 0$ for some $t \in [2\ell-c,2\ell-1]$,
for this would imply
$$
(qi + (q+1)j)_{t+\ell} = (qi)_{t+\ell} + ((q+1)j)_{t+\ell} 
= (i)_{t} + (j)_{t} = 0
$$
and since $t+\ell \in [3\ell-c, 3\ell-1]$ this implies the lemma.

Let $d = \sum_{k=1}^m d_k + d'_k$ and let $e = \sum_{k=1}^m e_k$.
Consider three cases.

\paragraph{Case 1.} $d < q^2$, $e < q$. In this case, the monomoial
$x^dy^e$ does not reduce, so it suffices to show that
$(d)_{s+b} = (e)_{s+b} = 0$. The only way one of these is $1$ is
by carrying from the lower order bits, so we may ignore the higher
order bits and assume without loss of generality that
$(d_k)_{s'} = (d'_k)_{s'} = (e_k)_{s'} = 0$ for $s' \ge s+b$.
Then $d_k, d'_k, e_k < 2^s$, so
$\sum_{k=1}^m d_k + d'_k < (m2^{s+1}) < 2^{s+b}$
and thus $(d)_{s+b} = 0$ and similarly $\sum_{k=1}^m e_k < m2^s
< 2^{s+b}$ so $(e)_{s+b} = 0$.

\paragraph{Case 2.} $d \ge q^2$, $e < q$. In this case, the monomial
$x^d y^e$ reduces to $x^{d \bmod (q^2-1)} y^e$. By the previous case,
$(e)_{s+b} = 0$, so it only remains to show $(d \bmod (q^2-1))_{s+b} = 0$.
Doubling $d$ cyclically permutes the bits of $d \bmod (q^2-1)$. In particular,
$(2d \bmod (q^2-1))_i = (d \bmod (q^2-1))_{i-1 \bmod 3\ell-1}$. Then
$(2^{3\ell-1-s-b}d \bmod (q^2-1))_i = (d \bmod (q^2-1))_{i+s+b+1-3\ell}$.
Therefore, it suffices to show that
$(2^{3\ell-1-s-b} d \bmod (q^2-1))_{3\ell-1} = 0$. Since the bits of order
$[s,s+b]$ of $d_k,d'_k$ are zero, the bits of order
$[3\ell-1-b,3\ell-1]$ of $2^{3\ell-1-s-b}$ times $d_k,d'_k,e_k$ are zero,
hence $2^{3\ell-1-s-b} d_k \bmod (q^2-1) < 2^{3\ell-1-b}$ and similarly for
$d'_k$. Therefore
$\sum_{k=1}^m 2^{3\ell-1-s-b}(d_k + d'_k) \bmod (q^2-1) 
< 2^{3\ell-1}$
so
 $(2^{3\ell-1-s-b}d \bmod (q^2-1))_{3\ell-1} = 0$, and
in fact $(2^{3\ell-1-s-b}d \bmod (q^2-1))_{3\ell-2} = 0$,
so we can conclude that $(d \bmod (q^2-1))_{3\ell-1} =
(d \bmod (q^2-1))_{3\ell-2} = 0$, which we need
in Case 3.

\paragraph{Case 3.} $e \ge q$. We induct on the $(q,q+1)$-weighted
degree $qd + (q+1)e$.
In this case, after reducing the $y$-degree
by one step, the monomial reduces to $x^{d+q+1}y^{e-q} - x^dy^{e-q+1}$.
The latter monomial has strictly smaller $(q,q+1)$-weighted degree,
so by induction it is in $\calL$. Thus it suffices to deal with
$x^{d+q+1} y^{e-q}$. Repeating this reduction and ignoring the monomials
with strictly smaller $(q,q+1)$-weighted degree, after at most $m$ reductions
(since $e_k < q$ and so $e < mq$) we have $x^{d+u(q+1)} y^{e \bmod q}$ for
some $u \le m$, which further reduces to
$x^{d + u(q+1) \bmod (q^2-1)}y^{e \bmod q}$. This is almost Case 2,
except for the additional $u(q+1)$ in the exponent of $x$.
By Case 2,
$(d \bmod (q^2-1))_{s+b-1} = (d \bmod (q^2-1))_{s+b} = 0$
and $(e \bmod q)_{s+b} = 0$. Note that since $\ceil{\log_2 m}
< \ell-c$, $u(q+1) \le m(q+1) < 2^{2\ell-c} + 2^{\ell-c} < 2^{s+1}$.
Write $d \bmod (q^2-1)$ as $d' + 2^{s+b+1}d''$ where
$d' < 2^{s+b-1}$. Then
$d + u(q+1) \bmod (q^2-1) = d' + u(q+1) + 2^{s+b+1}d''
< 2^{s+b-1} + 2^{s+1} + 2^{s+b+1}d'' < 2^{s+b} + 2^{s+b+1}d''$
so $(d + u(q+1) \bmod (q^2-1))_{s+b} = 0$.
\end{proof}

\begin{lemma}
\label{lemma:goodsize}
Let ${\rm Good}$ be defined as in Lemma~\ref{lemma:goodcond}.
Let $b = 2+\ceil{\log_2 m}$.
Then
$$
|{\rm Good}| \ge q^{3m}(1 - (1-2^{-mb})^{c/b}).
$$
\end{lemma}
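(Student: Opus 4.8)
The plan is to read $|{\rm Good}|/q^{3m}$ as a probability and then apply a standard estimate for the chance that a string of independent fair bits contains a long all-zero block. First I would observe that the monomials $\prod_{k=1}^m x_k^{i_k}y_k^{j_k}$ on $H^m$ are in bijection with the tuples having $i_k \in [0,q^2-1]$ and $j_k \in [0,q-1]$, so there are exactly $q^{3m}$ of them and $|{\rm Good}|/q^{3m}$ is the probability that a uniformly random such tuple lies in ${\rm Good}$. Concretely, in this model the $2\ell$ bits of each $i_k$ and the $\ell$ bits of each $j_k$ are mutually independent uniform bits.

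Second, I would note that the constraints on the $j_k$ in the definition of ${\rm Good}$ are vacuous. Since $\ell - c > \ceil{\log_2 m} \ge 0$ we have $c < \ell$, so every position $s+t$ with $s \ge 2\ell-c$ and $t \ge 0$ satisfies $s+t \ge 2\ell-c > \ell-1$; as $j_k < q = 2^\ell$, that bit is automatically $0$. Hence membership in ${\rm Good}$ depends only on the top $c$ bit positions $p \in [2\ell-c,\,2\ell-1]$ of $i_1,\dots,i_m$. I would then introduce, for each such $p$, the indicator $Z_p := \1_{\{(i_1)_p = \cdots = (i_m)_p = 0\}}$. By independence of the bits, the $Z_p$ are mutually independent with $\Pr[Z_p=1] = 2^{-m}$, and a tuple lies in ${\rm Good}$ precisely when the length-$c$ binary string $(Z_p)_p$ contains $b+1$ consecutive $1$'s.

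Third, to bound the complementary probability I would partition the $c$ positions $[2\ell-c,2\ell-1]$ into disjoint consecutive blocks (on the order of $c/b$ of them), each large enough to contain a length-$(b+1)$ window, and observe that the events ``all of $i_1,\dots,i_m$ vanish on block $B$'' are independent across disjoint blocks, each occurring with probability $2^{-m|B|}$; if any one of them occurs, the tuple is in ${\rm Good}$. Multiplying the complementary probabilities over the blocks yields
$$
\Pr\!\left[(\vec i,\vec j)\notin{\rm Good}\right] \;\le\; \bigl(1-2^{-mb}\bigr)^{c/b},
$$
which, after multiplying through by $q^{3m}$, is exactly the claimed lower bound on $|{\rm Good}|$.

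The arithmetic here is essentially routine; the one delicate point — and the part I would be most careful with in a full write-up — is the block-packing bookkeeping behind the last display. A length-$(b+1)$ all-zero window genuinely occupies $b+1$ positions, so the most naive disjoint-block count gives a bound of the shape $\bigl(1-2^{-m(b+1)}\bigr)^{\floor{c/(b+1)}}$; matching the precise form $\bigl(1-2^{-mb}\bigr)^{c/b}$ in the statement (and handling the floor in $c/b$) requires choosing the blocks and their spacing inside $[2\ell-c,2\ell-1]$ with some care. In any event a bound of the shape $\bigl(1-2^{-\Omega(mb)}\bigr)^{\Omega(c/b)} = o(1)$ as $c \to \infty$ is all that the rate computation for Theorem~\ref{theorem:main} consumes, so the exact constants in the exponent are not essential.
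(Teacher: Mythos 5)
Your proposal is correct and follows essentially the same route as the paper: interpret $|{\rm Good}|/q^{3m}$ as a probability over uniform tuples, discard the vacuous conditions on the $j_k$, partition the top $c$ bit positions into roughly $c/b$ disjoint blocks, and multiply the independent failure probabilities. The off-by-one you flag (a length-$(b+1)$ all-zero window versus length-$b$ blocks) is real but is glossed over in the paper's own proof too, which simply partitions the interval into $c/b$ blocks of length $b$; as you observe, only the qualitative shape of the bound is consumed by Theorem~\ref{theorem:main}.
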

\begin{proof}
We show the equivalent assertion that, by picking
$i_1,\ldots,i_m < q^2$ and $j_1,\ldots,j_m < q$ uniformly at random,
the probability that $(i_1,\ldots,i_m,j_1,\ldots,j_m) \in {\rm Good}$
is $1 - (1 - 2^{-mb})^{c/b})$ at least. Note
that each $j_k < q$ so we only need to consider the $i_k$.
Partition $[3\ell-c,3\ell-1]$ into $c/b$ intervals each of length $b$.
Let $E_i$ be the event that $(i_k)_t = 0$ for all $k \in [m]$
and all $t$ in the $i$th interval. By Lemma~\ref{lemma:goodcond},
if $\bigvee_i E_i$ then $(i_1,\ldots,i_m,j_1,\ldots,j_m) \in {\rm Good}$,
so the probability of landing in ${\rm Good}$ is at least
$$
\Pr\left[ \bigvee_i E_i \right] =
1 - \Pr\left[ \bigwedge_i \overline{E_i} \right] =
1 - (1 - 2^{-mb})^{c/b}.
$$
\end{proof}

Putting together Lemmas~\ref{lemma:goodcond} and~\ref{lemma:goodsize}
with the discussion above, we immediately obtain the following.

\begin{theorem}
\label{theorem:hermliftrate}
Let $m \ge 1$,
let $c > 0$ and let $\delta = 2^{-c}$.
Let $q$ be a power of $2$ such that $\delta q > m$,
and let $r = (1-\delta)q^3$.
Let $\Phi$ be the group of automorphisms on the Hermitian curve
$H \subseteq \F_{q^2}^2$ of the form $(x,y) \mapsto (ax+b, a^{q+1}y
+ab^qx + c)$ and let $\calL = \lift_\Phi^m(\Herm_q[r])$. 
Let $b = 2 + \ceil{\log_2 m}$. Then
the rate of $\calL$ is at least
$1 - (1 - 2^{-mb})^{c/b} \ge 1 - e^{-c/(b2^{mb})}$.
\end{theorem}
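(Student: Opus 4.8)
The plan is to read the rate off as $\operatorname{rate}(\calL) = \dim_{\F_{q^2}}(\calL)\,/\,q^{3m}$: by Proposition~\ref{prop:liftlinear} the lift $\calL$ is a linear code over $\F_{q^2}$, its domain $H^m$ has $|H^m| = (q^3)^m = q^{3m}$ points, so its block length is $q^{3m}$ and any lower bound on $\dim_{\F_{q^2}}(\calL)$ is a lower bound on the rate. To lower bound $\dim_{\F_{q^2}}(\calL)$ it suffices to exhibit many linearly independent codewords, and I would use the monomials on $H^m$. As observed in the text, the monomials $\prod_{k=1}^m x_k^{i_k} y_k^{j_k}$ with $i_k < q^2$ and $j_k < q$ form an $\F_{q^2}$-basis of $\{H^m \to \F_{q^2}\}$ — the exponent restrictions being precisely what forces them to define distinct functions on $H^m$ — so any subfamily of them is linearly independent over $\F_{q^2}$, and hence $\dim_{\F_{q^2}}(\calL)$ is at least the number of such monomials that lie in $\calL$.

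Next I would invoke the two lemmas already established. Lemma~\ref{lemma:goodcond} says that whenever the tuple $(i_1,\ldots,i_m,j_1,\ldots,j_m)$ lies in $\mathrm{Good}$ (with $b = 2 + \ceil{\log_2 m}$), the monomial $\prod_k x_k^{i_k} y_k^{j_k}$ lies in $\calL$; the hypotheses of the present theorem — $q$ a power of $2$ with $\delta q > m$, equivalently $\ell - c > \ceil{\log_2 m}$, and $r = (1-\delta)q^3$ with $\delta = 2^{-c}$ — are exactly the conditions under which that lemma was proved, so it applies verbatim and gives $\dim_{\F_{q^2}}(\calL) \ge |\mathrm{Good}|$. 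Then Lemma~\ref{lemma:goodsize} gives $|\mathrm{Good}| \ge q^{3m}\big(1 - (1-2^{-mb})^{c/b}\big)$. Combining the two and dividing by $q^{3m}$ yields $\operatorname{rate}(\calL) \ge 1 - (1-2^{-mb})^{c/b}$, the first inequality in the statement.

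For the second inequality I would use only the elementary bound $1 - x \le e^{-x}$: with $x = 2^{-mb}$ this gives $1 - 2^{-mb} \le e^{-2^{-mb}}$, and raising both (nonnegative) sides to the positive power $c/b$ yields $(1-2^{-mb})^{c/b} \le e^{-c\cdot 2^{-mb}/b} = e^{-c/(b2^{mb})}$, hence $1 - (1-2^{-mb})^{c/b} \ge 1 - e^{-c/(b2^{mb})}$. The only genuinely non-formal point in the whole argument is the claim that the monomials on $H^m$ with the stated exponent restrictions are linearly independent, and I expect this to be the (mild) main obstacle; I would justify it by a dimension count — there are $(q^2\cdot q)^m = q^{3m} = |H^m|$ such monomials, matching $\dim_{\F_{q^2}}\{H^m \to \F_{q^2}\}$ — together with the one-variable fact, implicit in the definition of $\Herm_q[r]$, that the $x^i y^j$ with $i < q^2$, $j < q$ are independent functions on $H$, which propagates to $H^m$ since the relevant coordinate rings multiply.
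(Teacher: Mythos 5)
Your proposal is correct and is essentially identical to the paper's proof, which is literally ``putting together Lemmas~\ref{lemma:goodcond} and~\ref{lemma:goodsize} with the discussion above'': lower bound the dimension by counting monomials on $H^m$ lying in ${\rm Good}$, use the fact that the monomials with $i_k < q^2$, $j_k < q$ form an $\F_{q^2}$-basis of $\{H^m \to \F_{q^2}\}$, and finish with $1 - x \le e^{-x}$. The only quibble is your parenthetical claim that $\delta q > m$ is \emph{equivalent} to $\ell - c > \ceil{\log_2 m}$ (it is implied by it but not conversely when $m$ is not a power of $2$); this discrepancy is already present between the paper's theorem statement and its setup paragraph, and either hypothesis suffices for Lemma~\ref{lemma:goodcond} to go through.
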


Putting everything together, we now prove Theorem~\ref{theorem:main}.

\begin{proof}[Proof of Theorem~\ref{theorem:main}]
Fix $\epsilon, \alpha, N_0 > 0$. Recall that we want, for infinitely many
$N \ge N_0$, a code of length $N$, rate $1-\alpha$, alphabet size
$N^{\epsilon/3}$, and is $(N^\epsilon, \Omega(1))$-locally correctable.

Set $m = \ceil{2/\epsilon}$. Let
$b = 2 + \ceil{\log_2 m}$ and set $c \ge b \cdot 2^{mb} \ln \frac{1}{\alpha}$.
Let $\delta = 2^{-c}$,
set $q$ to be a power of $2$ such that $\delta q > m$ and
$q^{3m} \ge N_0$. Set $N = q^{3m}$ and set $r = (1-\delta)q^3$.
Let $\calL = \lift_\Phi^m(\Herm_q[r])$ where $\Phi$ is the usual automorphism
group of the Hermitian curve $H \subseteq \F_{q^2}^2$.
By our choice of parameters and Theorem~\ref{theorem:hermliftrate},
$\calL$ has block length $q^{3m} = N$,
rate at least $1 - e^{-c/b2^{mb}} \ge 1-\alpha$,
alphabet size $q^2 \le N^{\epsilon/3}$,
has query complexity $q^6 \le N^\epsilon$,
and can correct up to
$\delta^2 = \alpha^{O((8/\epsilon)^{(2/\epsilon)}\log(1/\epsilon))}$.

\end{proof}

\paragraph{Explicitness of code.}
Although a lifted code is not a priori explicit even if the base code is,
Lemma~\ref{lemma:goodcond}
shows that the lifted Hermitian code (more accurately, a subcode
with the same parameter guarantees) is explicit in the following way.
Let ${\rm Good}$ be defined as in Lemma~\ref{lemma:goodcond}.
The $\F_{q^2}$-span of monomials in ${\rm Good}$ have the same rate
guarantees as the full lift, its block length and alphabet size and locality
are the same, and certainly its distance is at least as good, since it
is a subcode. Moreover, to encode a message $m \in \F_{q^2}^{\rm Good}$
into a codeword $\textsf{Enc}(m) \in \F_{q^2}^{H^m}$, first compute
all the monomials
in ${\rm Good}$, which can be done by iterating over every monomial
on $H^m$ and checking if it is in ${\rm Good}$, which can be done
in polynomial time. Then interpret the symbols of $m$ as coefficients
of the monomials in ${\rm Good}$ and let $\textsf{Enc}(m)$ be the evaluations
of $m$ on every point of $H^m$.

\section{Conclusion}
\label{section:conclusion}
In this work, we presented a general framework for constructing
high rate locally correctable codes.
Our framework is an abstraction of affine lifting~\cite{GKS13},
automorphic lifting~\cite{BGKKS13}, and high-degree lifting~\cite{BGKKS13}.
We showed that the lift of a code with good distance
with respect to some $\Phi$ that is close to doubly transitive also
has good distance, and moreover this holds even when the base code
is not invariant under $\Phi$ or when $\Phi$ is not a group.
We showed how one can generalize the construction of the lifted Reed-Solomon
code of~\cite{GKS13} to lift other algebraic geometry codes,
such as the Hermitian code
to obtain locally correctable codes that can attain
query complexity $N^\epsilon$ and rate $1-\alpha$ while
correcting a constant fraction of errors, for any given $\epsilon,\alpha > 0$.

We believe the lifting framework deserves further study.
Lifted codes naturally
have good locality properties. A natural direction to explore is the
local \emph{testability} of lifted codes.
A \emph{local tester} is given oracle access to a word $f$
and must distinguish whether $f \in \calC$ or $\delta(f,\calC) > \epsilon$
for some given constant $\epsilon > 0$. The work of~\cite{GKS13}
shows that affine lifting naturally yields affine-invariant locally
testable codes. An interesting question is whether lifting
algebraic geometry codes yields locally testable codes, and what kind
of assumptions on $\Phi$ are necessary (for example, that the base code
is $\Phi$-invariant or that $\Phi$ is a group).
In fact, \cite{GKS13} shows that both local correctability and
local testability follows generically from affine lifting.
In our work, local correctability follows generically from
lifting --- the instantiation of algebraic geometric base codes
is only used to analyze the rate. It would be interesting to
see if local testability follows generically from lifting as well.

\section*{Acknowledgements}
Thanks to Ronald Cramer and Chaoping Xing for pointing out a bug in a proof
analyzing the Hermitian tower code, which rendered the relevant sections
obsolete.
Thanks to Eli Ben-Sasson for helpful discussions and in particular
for suggesting Theorem~\ref{theorem:dist2}. Thanks to Madhu
Sudan for his helpful comments and encouragement.
Thanks to Badih Ghazi and Henry Yuen and for carefully proofreading parts of
this paper and for helpful comments on the writing.

\bibliographystyle{alpha}
\bibliography{coding}

\newcommand{\etalchar}[1]{$^{#1}$}
\begin{thebibliography}{BSGK{\etalchar{+}}12}

\bibitem[BSGK{\etalchar{+}}12]{BGKKS13}
Eli Ben-Sasson, Ariel Gabizon, Yohay Kaplan, Swastik Kopparty, and Shubhangi
  Saraf.
\newblock A new family of locally correctable codes based on degree-lifted
  algebraic geometry codes.
\newblock {\em Electronic Colloquium on Computational Complexity (ECCC)},
  19:148, 2012.

\bibitem[BSMSS11]{BMSS}
Eli Ben-Sasson, Ghid Maatouk, Amir Shpilka, and Madhu Sudan.
\newblock Symmetric {LDPC} codes are not necessarily locally testable.
\newblock In {\em IEEE Conference on Computational Complexity}, pages 55--65.
  IEEE Computer Society, 2011.

\bibitem[GKS13]{GKS13}
Alan Guo, Swastik Kopparty, and Madhu Sudan.
\newblock New affine-invariant codes from lifting.
\newblock In Robert~D. Kleinberg, editor, {\em ITCS}, pages 529--540. ACM,
  2013.

\bibitem[KSY11]{KSY}
Swastik Kopparty, Shubhangi Saraf, and Sergey Yekhanin.
\newblock High-rate codes with sublinear-time decoding.
\newblock In Lance Fortnow and Salil~P. Vadhan, editors, {\em STOC}, pages
  167--176. ACM, 2011.

\bibitem[Ree54]{ReedM}
Irving~S. Reed.
\newblock A class of multiple-error-correcting codes and the decoding scheme.
\newblock {\em IEEE Transactions on Information Theory}, 4:38--49, 1954.

\bibitem[Sti93]{Sti93}
Henning Stichtenoth.
\newblock {\em Algebraic function fields and codes}.
\newblock Universitext. Springer, 1993.

\bibitem[Xin95]{Xin95}
Chaoping Xing.
\newblock On automorphism groups of the hermitian codes.
\newblock {\em IEEE Transactions on Information Theory}, 41(6):1629--1635,
  1995.

\end{thebibliography}

\end{document}